\documentclass[a4paper]{article}
\setlength{\topmargin}{0.0cm}
\setlength{\textheight}{22.5cm}
\setlength{\oddsidemargin}{0cm} 
\setlength{\evensidemargin}{0cm}
\setlength{\textwidth}{16cm}
\setlength{\columnsep}{0.6cm}
%
%
%
\usepackage{latexsym}
\usepackage{graphics}
\usepackage{epsfig}
\usepackage{amsmath}
\usepackage{amssymb}
\usepackage{fancybox}
\usepackage{bm}
\usepackage{amsthm}
\usepackage{mathrsfs}
\usepackage{fancyhdr}
\usepackage{ascmac}

\usepackage{mathrsfs}
\usepackage{fancyhdr}
\usepackage{ascmac}

\def\cdott{{\bm{\cdot\cdot}}}
\def\bullett{{\bullet\bullet}}

\def\iII{{I\hspace{-.1em}I}}
\def\iIII{{I\hspace{-.1em}I\hspace{-.1em}I}} 
\def\IV{{I\hspace{-.1em}V}}
\def\VI{{V\hspace{-.1em}I}}

\def\C{\mathbb{C}}
\def\Z{\mathbb{Z}}

\def\TTT{{\mathfrak T}}

\def\d{{\cal D}}
\def\DD{{\cal D}}

\def\vvv{\mathfrak{v}}
\def\VVV{\mathfrak{V}}

\def\ff{{\cal F}}

\def\f{\mathscr{F}}

\def\GG{\mathscr{G}}

\def\S{{\cal S}}
\def\SSS{{\mathfrak S}}

\def\LL{{\cal L}}

\def\LLL{{\mathfrak L}}

\def\HH{{\cal H}}

\def\A{\mathscr{A}}

\def\B{\mathscr{B}}
\def\BBB{{\mathfrak B}}

\def\R{\mathbb{R}}

\def\Varepsilon{{\mathcal E}}

\def\eee{{\mathfrak e}}

\def\I{\mathscr{I}}

\def\M{{\cal M}}
\def\MM{\mathscr{M}}

\def\N{\mathbb{N}}

\def\R{\mathbb{R}}
\def\Ri{R}
\def\RRR{\mathfrak{R}}

\def\vomega{{\mathfrak w}}
\newtheorem{definition}{Definition}[section]
\newtheorem{theorem}[definition]{Theorem}
\newtheorem{lemma}[definition]{Lemma}

\newtheorem{example}[definition]{Example}

\newtheorem{remark}[definition]{Remark}
\newcommand{\QED}{\nobreak \ifvmode \relax \else
      \ifdim\lastskip<1.5em \hskip-\lastskip
      \hskip1.5em plus0em minus0.5em \fi \nobreak
      \vrule height0.75em width0.5em depth0.25em\fi}
\begin{document}
\title{
Stochastic Metric Space and Quantum Mechanics
}
\author{Yoshimasa Kurihara\footnote{yoshimasa.kurihara@kek.jp}
\\
{\it\footnotesize The High Energy Accelerator Organization (KEK), 
Tsukuba, Ibaraki 305-0801, Japan}\\ 
}
\maketitle
\begin{abstract}
A new idea for the quantization of dynamic systems, as well as space time itself, using a stochastic metric is proposed. 
The quantum mechanics of a mass point is constructed on a space time manifold using a stochastic metric.
A  {\it stochastic metric space} is, in brief, a metric space whose metric tensor is given stochastically according to some appropriate distribution function.  
A mathematically consistent model of a space time manifold equipping a stochastic metric is proposed in this report. 
The quantum theory in the local Minkowski space can be recognized as a classical theory on the stochastic Lorentz-metric-space.
A stochastic calculus on the space time manifold is performed using white noise functional analysis.
A path-integral quantization is introduced as a stochastic integration of a function of the action integral, and  it is shown that path-integrals on the stochastic metric space are mathematically well-defined for large variety of potential functions.
The Newton--Nelson equation of motion can also be obtained from the Newtonian equation of motion on the stochastic metric space.
It is also shown that the commutation relation required under the canonical quantization is consistent with the stochastic quantization introduced in this report.
 
The quantum effects of general relativity are also analyzed through natural use of the stochastic metrics. 
Some example of quantum effects on the universe is discussed.
\end{abstract}

\section{Introduction}\label{intro}
Although quantum mechanics (QM) is generally recognised as the most fundamental current theory describing nature, our understanding of QM remains incomplete. 
The most characteristic and mysterious aspect of QM is its requirement for a probabilistic treatment of dynamic systems. 
Unlike classical mechanics, in which the exact future configuration of a system is completely deterministic, and can in principle, be predicted if the initial conditions are sufficiently understood, physical quantities can only be probabilistically predicted in QM.
Once this probabilistic aspect is accepted via frameworks such as, the Copenhagen interpretation, the time evolution of  physical a system (its state) can be calculated exactly and its  physical observables predicted probabilistically.
Even though QM successfully explain the behavior of nature at roughly the atomic level and below, the probabilistic aspects of QM are still not understood completely.  

Several approaches have been proposed for constructing QM based on statistical/stochastic theory. 
In 1966, Nelson---who modeled the motion of a mass point as a Brownian motion under an external force---gave the initial approach in this direction.
The Newton--Nelson equation is given as a stochastic differential equation, that can be interpreted as a stochastic modification of the Newtonian equation of motion with Gaussian white noise. 
The Newton--Nelson equation has been shown to be equivalent to the Shr\"{o}dinger equation in some frameworks.
This general approach is called the {\it stochastic quantization} (SQ) method. 
Following Nelson's pioneering work, another SQ approach, the {\it stochastic space time} method, was proposed by Frederic\cite{PhysRevD.13.3183} in 1976 and has been pursued further by Ali, Prugove{$\breve{\rm c}$}ki and other authors\cite{prugovecki1984stochastic}. 
A key concept of this approach is to introduce {\it stochastic phase space}, which uses a coordinate probability density to represent the probability that a mass point is observed at two different phase points. 
Another SQ method to quantize field theories was put forth by Parisi and Wu\cite{parisi1981perturbation} in 1981, who proposed a quantum theory involving a hypothetical stochastic process with a fictitious ``{\it time}'', that differs from ordinary time.
In their theory, the conventional quantum field theory is reproduced as the thermal equilibrium limit of a stochastic system. 
Their method has been applied to many dynamical systems, for instance by Namiki, et al.\cite{NamikiMikio:1993-04-10}.
A summary of stochastic quantization can also be found in a review article\cite{1987PhR...152..227D}.

In this report, a new quantization method based on the mathematical theory of probability is proposed. 
The concept is developed as follows:
We consider the decay process of a given radioisotope. 
Because the probability of observing a decay during a unit of time is constant, the number of decays observed during a given time interval follows a Poisson distribution. 
Using this phenomenon, a clock in which the second hand advances each time a decay observed can be constructed; hereafter, this will be referred to as a {\it Poisson-clock}. 
We assume for simplicity that the Poisson-clock is designed to advance one tick per second on average. 
We then compare this clock to an ordinary mechanical clock, in which the time interval per tick of the second hand is constant. From the point of view of an observer using the mechanical clock, the second hand of the Poisson-clock seems to move randomly; however, this is of course a relative observation tied to the reference frame of the mechanical clock. 
If instead the time measured by the Poisson-clock is defined as the regular interval, the running of the mechanical clock becomes random.
A distribution of ``one second'' of the Poisson-clock, as measured by the mechanical clock, becomes an exponential distribution with an average value of unity.
Following the central limit theorem, the deviation between the Poisson and the mechanical clock after $n$ seconds will have a Gaussian distribution around zero with a variance of $n$. 
Using the mechanical clock to measure the time-of-flight of a free particle following a classical inertial path will result in a constant measured velocity.
On the other hand, if the Poisson-clock is used, measurement becomes a stochastic-process based on the Wiener measure and can be expressed using a stochastic differentiation equation. 
It has been shown that such as expression agrees with the stochastic equation obtained by Nelson\cite{PhysRev.150.1079}  that is used in stochastic quantization. 
Thus, classical mechanics with a Poisson-time measure results in QM, which suggests a new quantization method---{\it Stochastic Metric Quantization}(SMQ).
This observation can be extended to spatial coordinates as well, and an equal treatment of space and time is necessary to apply this method to relativistic quantum field theories. 
A quantum field theory can be given on the stochastic metric space, not only for flat spaces such as Minkowski space, but also for highly curved spaces such as the surface of the black hole.
As applications of this method, quantum effects in the early universe can be analyzed.

A main purpose of this work is to give a new framework of a quantum theory using mathematical tools of the stochastic metric space.
In other words, a new stochastic quantization method is proposed in this report.
A concept of our method is, in summary, that {\it classical mechanics in the stochastic space is equivalent to quantum mechanics on the standard space time manifold}.
This concept can not answer a question why quantum mechanics requires a probabilistic interpretation (the Born rule), but it can answer what is an origin of a probabilistic nature.
While our stochastic quantization gives consistent results to those from the standard method, it gives a new insight of quantum phenomenon. 
Moreover, a system which can not be quantized yet, e.g. gravitation, may be quantized using this stochastic quantization method.

This report is organized as follows:
In Section \iII,  a heuristic example showing a consequence of the stochastic metric space is given to enable better understanding of the mathematical discussion in following sections.
 In Section \iIII, the stochastic metric space is introduced and applied to the Minkowski manifold, as a model of a physical space time.
In section \IV, the stochastic quantization for a mass point is introduced and the relation among the stochastic and other quantizations is discussed.
In Section V, the effect of the stochastic metric on the space time manifold itself is discussed; this effect is found to be a quantum effect of general relativity.
In addition, stochastic effects on the Universe is presented.
The results of the stochastic quantization are summarized in Section \VI.

\section{Heuristic example}\label{sec2}
Before presenting a detailed discussion, a heuristic introduction of SMQ is provided in this section.
The central feature of SMQ is as follows:
when the classical  mechanics is defined on a stochastic metric space, statistical fluctuations appearing on the classical system may be  observed as quantum mechanical effects.
In this section, the hydrogen atom is treated as an example of how the stochastic metric works on a classical system.
\subsection{The Lamb shift: Welton's method}
The hydrogen atom is the first and simplest system to which QM was successfully used to provide a quantitative structural explanation.
In 1947, W. E. Lamb and R. C. Retherford reported an energy difference between the states, $^2S_{1/2}$ and $^2P_{1/2}$\cite{PhysRev.72.241} that does not appear in the lowest perturbative solution of the Dirac equation.
This energy shift is referred to the {\it Lamb shift}.
Eventually, quantum electrodynamics (QED) would be used as a successful application of renormalization calculus to calculate a precise value of the Lamb shift as a higher order correction.
Before renormalization calculus was established, however, in 1948 T$.$ A$.$ Welton gave an intuitive explanation \cite{PhysRev.74.1157} of the Lamb shift as the mean square amplitude of an electron coupled with the zero-point fluctuations of an electromagnetic field.
Welton used the following method to  derive the Lamb shift.
The potential energy, $V(r)$, is given as an isotropic function of the distance $r=|{\bm r}|$, where ${\bm r}$ is the three-dimensional vector of the electron's position relative to the proton. 
The potential energy can be Taylor-expanded with respect to the small displacement of the electron position $\bm{r}\rightarrow\bm{r}+\delta\bm{r}$ as;
\begin{eqnarray*}
V(\bm{r}+\delta\bm{r})&=&\left(1+\delta\bm{r}\cdot\nabla
+\frac{1}{2}\left(\delta\bm{r}\cdot\nabla\right)^2 
+\cdots\right)V(\bm{r}).
\end{eqnarray*}
When the displacement is identified as a variance of the electron position arising from the quantum fluctuation, an averaged value with respect to the vacuum state can be estimated as
\begin{eqnarray*}
\left\langle\delta\bm{r}\cdot\nabla\right\rangle&=&0,\\
\left\langle\left(\delta\bm{r}\cdot\nabla\right)^2\right\rangle&=&\frac{1}{3}\left\langle\delta\bm{r}^2\right\rangle\nabla^2,
\end{eqnarray*}
where the factor of $1/3$ in the second relation comes from averaging over the three spatial dimensions.
The hydrogen atom has a Coulomb potential energy $V(r)=-\alpha/r$, where $\alpha=1/137.0359895$ is the fine structure constant at the low energy limit.
The average of the potential energy fluctuation can be written as
\begin{eqnarray*}
\left\langle\delta V\right\rangle&=&\left\langle V(\bm{r}+\delta\bm{r})-V(\bm{r})\right\rangle
=
\frac{1}{6}\left\langle\left(\delta\bm{r}\right)^2\right\rangle
\left\langle\Delta\left(-\frac{\alpha}{r}\right)\right\rangle_{H},
\end{eqnarray*}
where $\langle\bullet\rangle_{H}$ is an average over the hydrogen atom.
The average of a square fluctuation of the electron's position can be estimated as
\begin{eqnarray*}
\left\langle\left(\delta\bm{r}\right)^2\right\rangle&=&\frac{2\alpha^3a_B^2}{\pi}\int^{k_1}_{k_0}\frac{dk}{k},
\end{eqnarray*}
where $a_{B}=1/(m_e\alpha)$ is the Bohr radius of the hydrogen atom and $m_e$ is the electron mass.
The upper and lower bounds of the integration must be fixed according to quantum field theoretic considerations.
The average of the Laplacian on the potential energy over the hydrogen atom can be estimated as
\begin{eqnarray*}
\left\langle\Delta\left(-\frac{\alpha}{r}\right)\right\rangle_H&=&4\pi\alpha\Big|\psi(0)\Big|^2,
\end{eqnarray*}
where $\psi(r)$ is the wave function of the hydrogen atom.
Here well-know relation $\Delta(1/r)=-4\pi\delta(\bm{r})$ is used; consequently.
The average value of the quantum fluctuation of the potential energy can be obtained as;
\begin{eqnarray*}
\left\langle\delta V\right\rangle&=&\frac{4}{3}\alpha^4a_B^2|\psi(0)|^2\log{\frac{k_1}{k_0}}.
\end{eqnarray*}
The $S$-wave solution for the hydrogen atom at the origin is then given as;
\begin{eqnarray*}
\psi_{n}(0)=\left[\pi\left(na_B\right)^3\right]^{-1/2},
\end{eqnarray*}
 where $n$ is the main quantum number.
 From field theoretic considerations, the lower and upper cutoffs of the electron momentum are chosen to be $k_0=\alpha^2m_e$ and $k_1=m_e$ respectively \cite{landau1977course}. 
 To enable appropriate setting of these cutoffs, UV and IR divergences are avoided in this method.
Finally, the deviation of the potential energy for the $S$-wave solution with the main quantum number $n$ is obtained as;
\begin{eqnarray*}
\left\langle\delta V\right\rangle&=&\frac{4\alpha^4}{3\pi n^3a_B}\log{\frac{1}{\alpha^2}}.
\end{eqnarray*}
This result is consistent with a more precise result\cite{landau1977course} obtained using perturbative QED, which is;
\begin{eqnarray*}
\left\langle\delta E_{QED}\right\rangle&=&\frac{4\alpha^4}{3\pi n^3a_B}\left(\log{\frac{1}{\alpha^2}}
+L_n+\frac{19}{30}
\right).
\end{eqnarray*}
Where the values of the correction term, $L_n$, are given in Table \ref{t1}. 
 
\begin{table}[b]
	\begin{center}
		\begin{tabular}{c|ccccc}
 			$n$ & $1$ & $2$ & $3$ & $4$ &  $\infty$\\
 			\hline
 			$L_n$ & $-2.984$ & $-2.812$ & $-2.768$ & $-2.750$ & $-2.721$\\
		\end{tabular}
		\caption{Numerical values of the correction term, $L_n$\cite{landau1977course}.}\label{t1}
	\end{center}
\end{table}
\subsection{Stochastic metric point of view}
Welton's method suggests that the quantum effects on hydrogen energy are caused by the variance of the electron position.
This scenario can be realized more naturally using the stochastic metric space.
As is well known in quantum field theory, the lowest result in the perturbative calculations is the same as the result of classical mechanics, and then a quantum effect will appear starting from one loop corrections.
Let us apply a stochastic method on the lowest Bohr solutions of hydrogen energy. 

From the stochastic metrical point of view, the distance between two points (the proton and electron positions) is given as a stochastic variable.
When distances are given stochastically, their distribution must be Gaussian following the central limit theorem, variance of the distance possibly be  proportional to distance. 
Here, we assume a variance $\sigma_n^2$, given as;
\begin{eqnarray}
\sigma_n^2&=&\frac{a_B^2}{3n}\left(\frac{\alpha}{\pi}\right)^2,\label{sigman}
\end{eqnarray}
where the factor $1/3$ reflects the dimensionality of space, the standard deviation of the Gaussian distribution is taken as proportional to $a_B/\sqrt{n}$, and a verification of the factor $(\alpha/\pi)^2$ will be given later in this section.
Hereafter, the atomic unit $a_B=1$ will be used in this section.
The statistical fluctuation of the potential energy can be written as;
\begin{eqnarray*}
\delta V&=&V(r(1+\delta_r))-V(r),\\
&=&-\frac{\alpha}{r}\left(1+\delta_r\right)^{-1}+\frac{\alpha}{r},\\
&=&-\frac{\alpha}{r}\left(
-\delta_r+\delta_r^2-\delta_r^3+\cdots
\right).
\end{eqnarray*}
The relative  deviation is then obtained as
\begin{eqnarray*}
\frac{\delta V}{V}&=&-\delta_r+\delta_r^2+\mathcal{O}(\delta_r^3).
\end{eqnarray*}
When the distance $r$ is given stochastically, the fluctuation $\delta_r$ should have a Gaussian distribution.
Thus, the average value of the relative deviation of the potential energy is given as;
\begin{eqnarray}
\left\langle\frac{\delta V}{V}\right\rangle&=&\frac{1}{\sqrt{2\pi}\sigma_n}\int_{-\infty}^{+\infty} d\delta_r
\frac{\delta V}{V}
\exp{\left(
-\frac{\delta_r^2}{2\sigma_n^2}
\right)},\nonumber\\
&=&\sigma_n^2+\mathcal{O}(\sigma_n^4),\label{dvov}
\end{eqnarray}
where $\sigma_n$ is given by (\ref{sigman}).
In reality, the Gaussian mean of the reciprocal variable
\begin{eqnarray}
\left\langle\frac{1}{r}\right\rangle&=&\frac{1}{\sqrt{2\pi}\sigma_n}\int_{0}^{+\infty} dr
\frac{1}{r}
\exp{\left(
-\frac{\left(r-\bar{r}_n\right)^2}{2\sigma_n^2}
\right)},\label{gmailri}
\end{eqnarray}
does not exist because the integration on the right-hand side does not converge.
Here, $\bar{r}_n$ is the mean radius of $S$-wave hydrogen with the main  quantum number $n$, which is considered to correspond to the UV divergence of the QED because this divergence comes from the short distance limit $r\rightarrow0$.
In our calculation, the divergence is avoided by looking at variances only around the average value.
If the integration in (\ref{gmailri}) is performed numerically only around the mean value (from which the main contribution must come), results consistent with (\ref{dvov}) can be obtained. 

We can now compare the numerical results of the stochastic metric (\ref{dvov}) with those of QED.
Because the $S$-wave hydrogen energy at the Born level is $\langle E_{QED}\rangle=-\alpha a_B/2n^2$, the relative correction on the hydrogen energy  can be given as;
\begin{eqnarray*}
\delta_{QED}&=&
\left|\frac{\left\langle\delta E_{QED}\right\rangle}{\langle E_{QED}\rangle}\right|
=
\frac{8\alpha^3}{3\pi n}\left(\log{\frac{1}{\alpha^2}}
+L_n+\frac{19}{30}
\right).\label{LambSLL}
\end{eqnarray*}
The numerical results are summarized in Table \ref{t2}, along with those from SMQ.
\begin{table}[b]
	\begin{center}
		\begin{tabular}{c|ccccc}
 				$n$ & $2$ & $3$ & $4$ & $100$	\\
			\hline
				$\delta_{QED}$ & $1.264\times10^{-6}$ & $8.473\times10^{-7}$ & $6.369\times10^{-7}$ & $2.557\times10^{-8}$\\
  			
				$\delta_{SMQ}$ & $1.272\times10^{-6}$ & $8.478\times10^{-7}$ & $6.359\times10^{-7}$ & $2.543\times10^{-8}$\\
				$\delta_{SMQ/QED}$  & $+0.64\%$ & $+0.07\%$ & $-0.17\%$ & $-0.53\%$\\   
		\end{tabular}
		\caption{
Energy corrections of $S$-wave hydrogen.
Numerical values of $\delta_{QED}$ and $\delta_{SM}$ are obtained using QED and SMQ, respectively.
As seen in the last row of the table, the agreement between the two methods, $\delta_{SMQ/QED}=(\delta_{SMQ}-\delta_{QED})/\delta_{QED}$, is excellent.
		}\label{t2}
	\end{center}
\end{table}
A relative correction from SMQ, $\delta_{SMQ}$, is defined as 
\begin{eqnarray*}
\delta_{SMQ}&=&\sqrt{2}\left\langle\frac{\delta V}{V}\right\rangle,
\end{eqnarray*}
in which the factor $\sqrt{2}$ appears because two independent statistical fluctuations of $\delta V$'s with $n=1$ and $n\geq2$ are, in principle, involved with $\delta_{SMQ}$.
For the calculation of $\delta_{QED}$ at $n=100$, the value of $L_n$ at $n=\infty$ is used.
The result at $n=100$ is shown to examine the asymptotic behavior of the correction at a large distance.
As seen in Table \ref{t2}, the agreement is excellent, which justifies the assumed factor $(\alpha/\pi)^2$ in the variance (\ref{sigman}).

\section{Introduction of stochastic metric space}\label{sPM}
The metric cannot be completely random; if the distance between two points on the space time manifold is completely random without any restrictions, the associated theory might not respect causality, in which it would not be useful for describing nature.
The SMQ method must therefore be constructed on a legitimate mathematical base.

Study of the mathematical theory of the stochastic metric was initiated by Menger\cite{menger1942statistical} in 1942, whose work was immediately followed by that of Wald\cite{wald1943statistical} in 1943. 
After intensive efforts by Schweizer and Sklar\cite{schweizer1960} in the late 1950s, $\breve{\rm S}$erstnev introduced  the (distance) distribution and triangle functions\cite{Serstnev1962} in 1963.  
To honor his pioneering work, the stochastic (statistical) metric space is generally called the {\it Menger space}; in this report, we refer to it as the {\it Stochastic metric-space}(SM-space).
Mathematical research on the SM-space is ongoing, albeit with limited application to studies in the field of physics.
One example of physical research is  the relation between stochastic phase spaces and SM-spaces, as discussed by Guz\cite{Guz1984-GUZSPS}, the mathematical preparation of which was given at the beginning of this section.
%
%
\subsection{Stochastic metric space}\label{sPMS}
Here basic definitions of the SM-space are introduced based primarily on $\breve{\rm S}$erstnev\cite{Serstnev1962}, Schweitzer and Sklar\cite{schweizer1983probabilistic}.
The distribution and triangle functions play essential roles in characterizing the SM-space.
A {\it distribution function} $F$, is a non-decreasing function defined as  a map such that;
\begin{eqnarray*}
F:\R\rightarrow[0,1]:s\mapsto F(s),
\end{eqnarray*}
with 
\begin{eqnarray*}
F(-\infty)=0,~F(\infty)=1,&~&
~{\rm and}~
s<t\Rightarrow F(s)\leq F(t).
\end{eqnarray*}
The distribution function satisfying $F(0)=0$, which is referred to as the {\it distance distribution function}, plays the role of a cumulative (probability) distribution function to obtain a distance, $s$.
A set of distance distribution functions is denoted as $\Delta^+$, and the triangle function, $T$, is introduced to give a natural extension of the triangle inequality on Euclidean space.
$T$ is defined as a map such as;
\begin{eqnarray*}
T:\Delta^+\otimes\Delta^+\rightarrow\Delta^+:\{F,G\}\mapsto T(F,G),
\end{eqnarray*}
which satisfies the following for any distribution functions $F,G,H,K\in\Delta^+$,
\begin{enumerate}
\item $T(F,G)=T(G,F)$;
\item $T(F,G)\leq T(H,K)$, whenever $F\leq H$,  and $G\leq K $;
\item $T(F,\Theta)=F$, where $\Theta(s)$ is the Heaviside unit function.
Note that $\Theta(s)\in\Delta^+$;
\item $T(T(F,G),H)=T(F,T(G,H))$.
\end{enumerate}
Here, $F\leq H$ means $F(s)\leq H(s)$ for any $s>0$.
The SM-space can be defined as follows using the above two functions.
\begin{definition}{\bf (Stochastic metric space)}\label{PMS}\\
A stochastic metric space  is a triple $(\S,\f,T)$ of
\begin{itemize}
\item $\S$: a set of points on the space;
\item $\f$: a map $\f:\S\otimes\S\rightarrow\Delta^+$, {\rm and};
\item $T$: a triangle function,
\end{itemize}
which satisfies for any points $p,q,r\in \S$ and $s,s'\in\R$;
\begin{enumerate}
\item $\f(p,p)(s)=\Theta(s)$;
\item $\f(p,q)(s)\neq\Theta(s)$, if $p\neq q$;
\item $\f(p,q)(s)=\f(q,p)(s)$, {\rm and};
\item $\f(p,r)(s+s')\geq T(\f(p,q)(s),\f(q,r)(s'))$.
\end{enumerate}
\end{definition}\noindent
If Requirement~{\it 2} is omitted, the space is referred to as a stochastic {\it pseudometric} space. 
Requirement {\it 4} is a type of triangle inequality originally introduced by Menger\cite{menger1942statistical}. 

%
%
\subsection{Stochastic Minkowski manifold}\label{sPMM}
To utilize the SM-space for relativistic field theories, the stochastic metric should be introduced on a Minkowski manifold. 
The Minkowski manifold equipping the stochastic metric is referred to as the {\it stochastic Lorentz metric space} (SLM-space) hereafter. 
A distance between two points on the SLM-space will fluctuate around the geometric distance measured by the Lorentz metric, with the distance measured by the (non-fluctuating) Lorentz metric referred here as the {\it geometrical distance}.
A distribution function is required to give the geometrical distance as an average value over a two-point ensemble on the SLM-space, with the variance of the distribution function set to be proportional to its geometrical distance.
This distribution function makes a null vector (vector with zero length) without any fluctuation, a desirable characteristic for restraining a null photon mass after quantum corrections.
Furthermore, to satisfy the causality condition the probability changing a sign of the length of a string stretching between two points must be zero.
The SLM-space satisfying above requirements can be introduced as follows:
\begin{definition}{\bf (Stochastic Lorentz metric space)}\label{PMM}\\
A stochastic Lorentz-metric-space (SLM-space) is a triple of $(\M,F_{e},T)$ such that:
\begin{enumerate}
\item $\M$ is a $4$-dimensional Minkowski manifold with a metric tensor
$
\eta_{\mu\nu}=\mathrm{diag}(1,-1,-1,-1).
$
The geometrical distance between two points on $\M$, say $x^\mu$ and $y^\mu$ in the local coordinate system is defined as
\[
\d(x-y)=\sqrt{|\eta_{\mu\nu}(x-y)^\mu(x-y)^\nu|},
\]
using the Lorentz-metric tensor.
In this report, the Einstein convention to take the sum over repeated indices is used. 
The indices run from zero to three. 
\item $F_e(s;d)$ is  a map of 
$
\{d;s\}\mapsto F_e(s;d)\in[0,1], 
$
where $F_e$ is chosen as the following exponential probability density;
\begin{eqnarray*}
F_e(s;d)&=&
\begin{cases}
0&s<0,\\
1-\exp{\left(-\frac{s}{d}\right)}&s\geq0.
\end{cases}
\end{eqnarray*}
\item $T$ is  the triangle function defined as  
\begin{eqnarray*}
T\left(F_e\left(s;\d(x-y)\right),F_e\left(s';\d(y-z)\right)\right)
=F_e\left(s;\d(x-y)\right)F_e\left(s';\d(y-z)\right),
\end{eqnarray*}
for any $x,y,z\in\M$ and $s,s'>0$.
\end{enumerate}
\end{definition}\noindent
The function $F_e\left(s; \d(x-y)\right)$ introduced above can be interpreted as the probability of observing a distance less than $s$ when the geometrical distance between $x$ and $y$ is $\d(x-y)$. 
In other words, the probability of finding a distance greater than $s\geq 0$ is $\bar{F}_e(s,d)=1-F_e(s,d)=\exp{\left( -s/d\right)}$ when the geometrical distance is $d$.
The SM-space, whose distribution function $\f$ is a function of $s/\d(x-y)$ only, is called the  {\it simple space}\cite{schweizer1960statistical}; the SLM-space is an example of a simple space. 
 A probability measure can be derived from $F_e\left(s; \d(x-y)\right)$ as follows. 
 The exponential probability density of a variable $s$ with a mean value of $\lambda$ is 
\begin{eqnarray*}
\mu_e(s;\lambda)&=&\frac{1}{\lambda}
\exp{\left(-\frac{s}{\lambda}\right)}ds.
\end{eqnarray*} 
Therefore, the probability measure corresponding to $F_e(s; \d(x-y))$ in the SLM-space can be written as 
\begin{eqnarray*}
dF_e\left(s;\d\left(x-y\right)\right)&=&\frac{1}{\d(x-y)}\exp{\left(-\frac{s}{\d(x-y)}\right)}ds
=\mu_e\left(s;\d(x-y)\right).
\end{eqnarray*}
An interpretation of the probability measure $\mu_e\left(s;\d(x-y)\right)$ is that it is the probability of obtaining the distance within  $[s,s+ds]$ for $\d(x-y)$. 
It can be shown that the SLM-space satisfies the above definition of a Stochastic Lorentz metric space with an additional requirement, i.e., it is a stochastic pseudometric space. Thus, we give the following remark.
\begin{remark}{\bf (the SLM-space is the SM-space)}\label{PMMS}\\
The stochastic Lorentz metric-space $(\M,F_e,T)$ is a stochastic pseudometric space. 
\end{remark}\noindent
A  proof of the above remark is given in Appendix \ref{ap1}.
The SM-space has a rich structure\cite{schweizer1960statistical} as follows: 
\begin{itemize}
\item The SM-space is a topological space with a $\lambda$-neighborhood at $p\in\M$ as follows;
\begin{eqnarray*}
N_p(\varepsilon,\lambda)
=\left\{q|F_e\left(\varepsilon;\d(p-q)\right)>1-\lambda\right\},
\end{eqnarray*}
where $\varepsilon,\lambda>0$. 
\item The SM-space is a Hausdorff space with a triangle function under the above topology. 
\item On the SM-space with the continuous triangle function, the convergence and continuity of the distribution function $F_e(s;\d(p-q))$ is ensured.
\end{itemize}
These properties ensure that the physical theory can be constructed on the base of the SLM-space following {\bf Remark \ref{PMMS}}.

In addition to the above properties,  it is essential that the physical theory must retain causality.
In relativistic theories, causality expressed by the following statement:
when the (geometrical) distance between two points $p,q\in\M$ is space like such that $|p-q|<0$, the two-point correlation function must be zero.
Our choice of distribution function, $F_e(s;\d(p-q))$, does not destroy causality because it does not change the sign of the distance; moreover, the existence of the triangle function ensures that there are no shortcuts between two space time points.

%
%
\subsection{The Poisson process and Gaussian measure}\label{poissongauss}
Consider a series of independent random variables $\{X_i\}_{i=1,2, \cdots}$ with a probability measure with $P(0<s<X_i<s+ds)=p(s)ds$.
A series of random variables $\{A_j\}_{j=0,1,2,\cdots}$ such as; $A_0=0$ and $A_j=\sum_{i=1}^jX_i$ will (almost surely) be a the {\it stochastic-process induced by the probability measure $p(s)ds$}.
It is known that the exponential measure induces the Poisson process; a detailed explanation of the Poisson process is given in Appendix \ref{ap2}.
The following remark will play an essential role in constructing the SMQ method.
\begin{remark}{\bf (Brownian motion induced by exponential measure)}\label{brownEM}\\
Suppose that the Poisson process is induced by the exponential measure $\mu_e(s,\lambda)$ with a series of random variables, $\{X_s\}_{s=1,2,\cdots}$, and 
\begin{eqnarray*}
Z_n&=&\frac{1}{\sqrt{n}}\sum_{s=1}^n(X_s-\lambda).
\end{eqnarray*}
The random variable $Z_n$ is know to show a convergence in law to a normal distribution with a mean value at zero and a variance $\lambda$. 
In particular, for $-\infty<a<b<\infty$,
\begin{eqnarray*}
\displaystyle \lim_{n \to \infty}P(a\leq Z_n\leq b)&=&\frac{1}{\sqrt{2\pi\lambda}}\int_a^b \exp{\left(-\frac{x^2}{2\lambda}\right)}dx,
\end{eqnarray*}
can be obtained.  
Brownian motion is induced by this random variable.
\end{remark}
\noindent
The proof of this remark is given in Appendix \ref{ap3}.
To summarize the above derivations, if the position of a force-free mass point at time $t$ can be represented by a series of random variable $\{X_s\}$, the mass point can follow a Brownian motion and a set of positions, each at an equal geometrical distance, should be observed as the Poisson process in the SLM-space. 

\section{Dynamics of a mass point in the SLM-space}\label{sec3}
Particle dynamics are introduced on the SLM-space in this section. 
Initially, the {\it path} of a mass point is defined geometrically on the classical space time manifold. 
The term {\it classical} is used in this section to denote objects that do not have a stochastic property.
Based on this initial treatment, the Lagrangian density and action integral are defined in terms of the path; in other words, paths are treated as dynamical objects. 
Quantum effects of particle dynamics, such as the quantum equation of motion and the uncertainty relation, are obtained because of the SLM-space. 
\subsection{Geometrical setup}\label{sec3-1}
One of the more important calculations in the dynamical theory of a mass point is determining its trajectory. 
The trajectory may be a continuous smooth line between two space time points as would occur under a classical theory.
Such  trajectory, called a {\it curvilinear path}, is defined geometrically on the Minkowski manifold. 
In this subsection, a classical Minkowski manifold is assumed and the stochastic aspects of the manifold are suspended for the time being.

The geometrical setup introduced in this subsection is based on reference\cite{kurihara2014}.
A set of bounded functions $\gamma$ such that
\begin{eqnarray}
&\gamma&: \tau \in [\tau_1,\tau_2] \mapsto 
{\bf \gamma}(\tau)=\left(\tau,\gamma^1(\tau),\gamma^2(\tau),\gamma^3(\tau)
\right),\label{curveg}\\
&\gamma^i&:\tau\mapsto\gamma^i(\tau),~~\{\gamma^i\}_{i=1,2,3}\in C^2(\R),\label{cvlin}
\end{eqnarray}
is called a {\it curvilinear path} (or simply {\it path}), and a set of paths, denoted by  $\Gamma$, is called a curvilinear-path space.
A parameter $\tau$ is an order parameter used to measure the length of a path from the starting point $\tau_1$, which is set to zero without any loss of generality. 
We set $\tau_1=0$ and $\tau_2=T$, and  hereafter only paths whose end points are fixed at $\gamma(0)=\xi^\mu_1=(0,\xi_1^1,\xi_1^2,\xi_1^3)=(0, \bm\xi_1)$ and $\gamma(T)=\xi^\mu_2=(T,\xi_2^1,\xi_2^2,\xi_2^3)=(T, \bm\xi_2)$ are considered.
The dynamics of a mass point cannot be determined using only information on the path itself; the time-derivative along the path is also necessary. 
A velocity vector is a tangent vector at a point $\gamma(\tau)$ along the curvilinear path, defined as
\begin{eqnarray}
\frac{d\gamma}{d\tau}(\tau)&=&
\left(1,\frac{d\gamma^1}{d\tau}(\tau),\frac{d\gamma^2}{d\tau}(\tau),
\frac{d\gamma^3}{d\tau}(\tau)\right).
\end{eqnarray}
Hereafter, the velocity vector is written using  shorthand notation as $\dot{\bf \gamma}(\tau)=(1,\dot{\gamma}^1(\tau),\dot{\gamma}^2(\tau),\dot{\gamma}^3(\tau))$. 
The velocity vector can be expressed in terms of natural bases on a tangent space $T_\tau\gamma$ at $\gamma(\tau)$ as
\begin{eqnarray}
\dot{\bf \gamma}^i(\tau)=
\frac{d\gamma^i(\theta)}{d\theta}
\frac{\partial}{\partial \gamma^i}\Big|_{\theta=\tau},\label{velo}
\end{eqnarray}
where $i$ runs from $1$ to $3$ and is not summed in the right-hand side of (\ref{velo}).
A tangent vector bundle $\dot{\Gamma}=\bigcup_{\gamma(\tau)\subset\gamma\in\Gamma}T_t\gamma$ is referred to a velocity bundle.  

All the information needed to determine the dynamics of the system is contained in the Lagrangian, which can be defined as a map from $\{\Gamma, \dot{\Gamma}\}$ to a smooth function such that
\begin{eqnarray}
\LL:\Gamma\otimes\dot{\Gamma}
\rightarrow C^{\infty}:\{\gamma,\dot{\gamma}\}
\mapsto \LL(\gamma,\dot{\gamma}).
\end{eqnarray}
A new dynamical object, the canonical momentum, can be derived from the Lagrangian as;
\begin{eqnarray}
\pi^i=\frac{\delta\LL(\gamma,\dot{\gamma})}{\delta\dot{\gamma}_i},
\end{eqnarray}
where $\delta$ denotes the functional derivative.
A tuple of $\GG=\{\gamma,\pi\}$ is referred to as a {\it phase space}.

The Hamiltonian can be derived from the Lagrangian and the canonical momentum as;
\begin{eqnarray*}
\HH:\GG\rightarrow C^{\infty}:
\{ \gamma,\pi \} &\mapsto&
\HH\left(\gamma(t),\pi(t)\right),
\end{eqnarray*}
where
\begin{eqnarray}
\HH\left(\gamma(t),\pi(t)\right)
&=&\left(\pi_i \dot{\gamma}^i-\LL(\gamma,\dot{\gamma})\right)\Bigl|_{\dot{\gamma}=\phi}
=\pi_i \phi^i(\gamma,\pi)-\LL(\gamma,\phi(\gamma,\pi)).
\label{LagHam1}
\end{eqnarray}
The function $\phi^i(\gamma,\pi)$ gives the solutions of 
\begin{eqnarray}
\pi_i-\frac{\delta \LL}{\delta \dot{\gamma}^i}=0,~~~~i=1,2,3\label{LagHam2},
\end{eqnarray}
with respect to the velocity vector of $\dot{\gamma}^i=\phi^i(\gamma,\pi)$.
Here, the Lagrangian is assumed to be a holomorphic function and therefore a solution $\phi^i(\gamma,\pi)$ always exists and the Hamiltonian is defined on the phase space.
The existence of  the inverse Legendre transform is ensured  because
\begin{eqnarray}
\LL(\gamma,\dot{\gamma})
&=&\tilde{\phi}_i(\gamma,\dot{\gamma})\dot{\gamma}^i-\HH\left(\gamma,\tilde{\phi}(\gamma)\right)\label{LagHam3},
\end{eqnarray}
where $\tilde{\phi}$ is a solution of the equation of
$\dot{\gamma}^i-\delta\HH/\delta\pi_i=0$, that is
\begin{eqnarray}
\dot{\gamma}^i&=&\frac{\delta\HH}{\delta\pi_i}\Big|_{\pi=\tilde{\phi}}.\label{gammadot}
\end{eqnarray}
The action integral can be introduced using the Lagrangian or Hamiltonian as follows: 
a map $\I(\gamma)$ from a path $\gamma$ to a real number $\R$ such as
\begin{eqnarray*}
\I:\Gamma\otimes\dot{\Gamma}\rightarrow\R:
\{\gamma,\dot{\gamma}\} &\mapsto&\I(\gamma,\dot{\gamma}),
\end{eqnarray*}
where
\begin{eqnarray}
\I(\gamma,\dot{\gamma})&=&
\int_0^Td\tau~\LL\left(\gamma(\tau),\dot{\gamma}(\tau)\right)
=\int_0^Td\tau~\left(
\pi_i\dot{\gamma}^i-\HH\left(\gamma,\pi\right)\right)\Big|_{\pi=\tilde{\phi}}\label{action}
\end{eqnarray}
is referred to an {\it action integral} or simply an {\it action}.
As is well known, the curvilinear path that gives $\delta\I=0$ satisfies the canonical equations of motion;
\begin{eqnarray}
\frac{d\gamma_c^i}{d\tau}&=&\frac{\partial\HH}{\partial\pi_{ci}},~~~~
\frac{d\pi^i_c}{d\tau}=-\frac{\partial\HH}{\partial\gamma_{ci}},\nonumber\\
\frac{d\HH}{d\tau}&=&\frac{\partial\HH}{\partial\tau}.
\end{eqnarray}
A solution of the equation of motion is called the {\it classical path} and denoted by $\gamma_c(\tau)$. 
The classical velocity and momentum can be further defined from the solution of the equation of motion and denoted as $\dot{\gamma}_c$ and $\pi_c$, respectively. 
\subsection{Stochastic quantization of classical paths}\label{sec3-2}
In the Lagrangian formalism of classical mechanics, the principle of least action (Hamilton's principle) is  the most fundamental principle.
The behavior of a dynamical system can be described by the Euler--Lagrange, or canonical equations of motion; or the Hamilton--Jacobi equation, which can be derived from the Lagrangian, Hamiltonian or action integral, respectively. 
In the traditional method, QM is introduced only after establishing the classical dynamics of the system by requiring some quantization conditions on the system.
However, if QM is the most fundamental theory of nature, a dynamical system  should be formulated using QM as a precedential theory to classical mechanics.
Following this, some appropriate approximation can be used to derive classical dynamics from the quantum system.
Here, we propose a new algorithm establishing a quantum system preceding a classical system.
(There is another approach to extract quantum mechanics directly using the principle of least action\cite{Yasue3,Yasue1981327}.) 
As our approach is based on the stochastic property of the metric, the principle of least action is not the most fundamental principle but is instead a theorem that can be derived  from a more fundamental stochastic principle.  
To replace the principle of least action, we propose using the {\it entropy-extremal principle} to extract dynamics from the system that equips the Lagrangian.

In the following, we first show that path-integral quantization can be deduced from the entropy-extremal principle.
Then, yet another method---Newton--Nelson equation\cite{gliklikh}---to extract the quantum equation of motion is given for the SLM-space.

\subsubsection{Path integral on the SLM-space}
A mass point traveling in the SLM-space will follow a trajectory that, owing to the probabilities of the SLM-space, would be expected to be a stochastic process. 
Such a trajectory cannot be a smooth curvilinear path as defined in the previous subsection; correspondingly, the instantaneous velocity of the mass point will not exist in classical sense and must be redefined in terms of probability theory. 

Let us consider a following integration:
\begin{eqnarray}
\gamma^i_c(\tau)-\gamma^i_c(0)&=&
\int_{\gamma(0)}^{\gamma(\tau)}d\gamma^i_c\label{gc}
=\int_0^\tau\dot{\gamma}^i_c(t)dt,\label{gc2}
\end{eqnarray}
where $i=1,2,3$ is the spatial components of the vectors.
The parameter $\tau$ is an ordering parameter along the curvilinear path, functioning as, e.g. tick marks along the path. 
We then consider the integration of a bounded function, $f(x)\in C^2$ along the classical path; 
\begin{eqnarray}
I&=&\int_{\gamma(0)}^{\gamma(\tau)}f(\gamma_c)\Bigl|d\gamma_c\Bigr|
=\int_0^\tau f\left(\gamma_c(t)\right)\Bigl|{\dot \gamma_c}(t)\Bigr|dt,\label{fint2}
\end{eqnarray}
where the integration measure is given as
\begin{eqnarray}
\Bigl|d\gamma_c\Bigr|&=&\sqrt{|\eta_{\mu\nu}d\gamma_c^\mu d\gamma_c^\nu|~}
=\sqrt{|\eta_{\mu\nu}{\dot \gamma}_c^\mu(t){\dot \gamma}_c^\nu(t)|~}~dt,\nonumber\\
&=&\Bigl|{\dot \gamma_c}(t)\Bigr|dt.\label{dgammac}
\end{eqnarray}
We note that these lengths are also bounded under classical mechanics, and, therefore, that integrations (\ref{gc}) and  (\ref{fint2}) are well-defined as Riemann-Stieltjes integrations.
The classical path and  velocity, $\gamma_c(\tau)$ and $\dot{\gamma}_c(\tau)$, respectively, are functions of $\tau$ under the Lagrangian formalism.
On the SLM-space, the trajectory of a mass point is cast as a stochastic process induced by the exponential measure; therefore, the motion of a mass point in the SLM-space can be treated as Brownian motion following {\bf Remark \ref{brownEM}}. 
We consider, e.g. a mass point stopping at the origin of the coordinate system in classical mechanics.
The mass point remains at the same spatial position  for a length of time $\tau$.
The geometrical distance between the initial and final points is also $\tau$. 
In the SLM-space, distance is a random variable and has a Gaussian distribution with a standard deviation proportional to $\sqrt{\tau}$ following the stipulation of {\bf Remark \ref{brownEM}}  that there is no motion other than Brownian motion.

The quantum effects on the mass point appear because of the stochastic treatment for the path. 
The classical path and  velocity are replaced by random variables induced by the exponential measure of the SLM-space. 
The It{\^ o} process, defined in {\bf Definition \ref{ip}}, is introduced as
\begin{eqnarray}
\gamma^i_\omega(\tau)&=&\gamma^i_c(0)+
\int_0^\tau\dot{\gamma}^i_\omega(t)dt
=\gamma^i_c(0)+\int_0^\tau\sigma^i_k(t)\cdot dB^k(t),\label{sdg}
\end{eqnarray}
in analogy with equation (\ref{gc2}). 
A precise definition of (\ref{sdg}) can be found in Appendix \ref{ap4}.
We note that the initial point is fixed almost surely at $\gamma^i_\omega(0)=\gamma^i_c(0)$. 
Here, $\gamma_\omega(\tau)$ is referred to as the {\it sample path}. 
The sample path is an element of $\Gamma^0$,  a set of continuous lines between $\gamma^i(0)$ and $\gamma^i(T)$ in which the condition $C^2$ is relaxed to $C^0$ from a definition of the curvilinear path (\ref{cvlin}).
Moreover,  $\gamma_\omega(\tau)$ is unbounded owing to the Brownian motion $B^k(s)$, as will be discussed below.
The random variable $\dot{\gamma}_\omega(\tau)$ must also be treated as a stochastic process.
$B^k(s)~(k=1,2,3)$ are three independent Brownian motions and $\sigma^i_k(s)=\delta^i_k \sigma(s)$
is a square integrable function. 
The Brownian motion $B(s)$ is differentiable nowhere, and the integration measure;
\begin{eqnarray*}
dB^k(s)&=&\frac{dB^k(s)}{ds}ds~=~{\dot B}^kds,
\end{eqnarray*}
must be understood as the Gaussian white noise (Hida-distribution) introduced by Hida\cite{hidasisi2008}.
From {\bf Definition \ref{ip}}, the correspondence is then as $\{\gamma^i_\omega(\tau), \dot{\gamma}^i_\omega(\tau)\} \Leftrightarrow \{X^i_t, \beta^i_s\}$.  
Note that $\dot{\gamma}_\omega$ and $\ddot{\gamma}_\omega$ do {\it not} denote $d\gamma_\omega/d\tau$ and $d\dot{\gamma}_\omega/d\tau$, respectively,  but are random variable independent  of $\gamma_\omega$ and $\dot{\gamma}_\omega$, respectively. 
The stochastic-process induced by the random variable $\gamma_\omega$ defined by (\ref{sdg}) becomes the Brownian motion, which is ensured in the SLM-space following {\bf Remark \ref{brownEM}}. 
Therefore, the phase space $\{\gamma^i_\omega(\tau), \dot{\gamma}^i_\omega(\tau)\}$ can be understood as a pair of random variables representing  a sample trajectory of a mass point on the SLM-space,  while integration (\ref{sdg}) must be understood as a stochastic integration.
The convergence of these integrations are not trivial, unlike the convergence of integration (\ref{fint2}).

The stochastic representation of the canonical momentum $\pi_\omega$ is also introduced as
\begin{eqnarray}
\pi_\omega&=&\frac{\delta\LL}{\delta\dot{\gamma}_c}\Bigl|_{c\rightarrow\omega}.\label{piomega}
\end{eqnarray}
Here, $\bullet|_{c\rightarrow\omega}$ replacing an operation with a corresponding stochastic process (functional variation). 
In the SLM-space, the Lagrangian must be understood to be a random variable defined as $\LL_\omega=\LL(\gamma_\omega,\dot{\gamma}_\omega)(\tau)$; therefore, the action integral can be considered to be a random variable as follows; 
\begin{eqnarray}
\I_\omega=
\int_0^Td\tau~\LL\left(\gamma_\omega(\tau),\dot{\gamma}_\omega(\tau)\right).
\end{eqnarray}
This integration cannot be a Lebesgue-Stieltjes integration because the stochastic process $\LL(\gamma_\omega,\dot{\gamma}_\omega)$ is a continuous but unbounded function.
If it exists, the expected value of this random variable might coincide with the classical action.
The expected value is formally written using (stochastic) integration over all elements of a set of possible sample paths, which are denoted as $\B_\Gamma$.
The existence of the Lebesgue measure on such an infinite dimensional space cannot be expected in general;
a mathematical treatment of the integration on $\B_\Gamma$ will be discussed later.

The classical path is obtained as the expected values of the stochastic process of the $\gamma_\omega$ as follows:
\begin{eqnarray*}
\I_c&=&E_\omega\left[\I_\omega|\B_\Gamma\right]
~=~\sum_{\gamma_\omega\in\Gamma}
\Bigl|\psi(\gamma_\omega)\Bigr|^2\I(\gamma_\omega)
=\int_0^Td\tau~\LL\left(\gamma_c(\tau),\dot{\gamma}_c(\tau)\right),
\end{eqnarray*}
where $\psi(\gamma_\omega)$ is the probability amplitude, which is introduced according to {\it Definition 4.1} in ref.\cite{kurihara2014}.  
Under this setup,  an additional principle---the {\it extremal entropy principle}---can be stated: the quantum probability amplitude and probability measure that extremalize the entropy; 
\begin{eqnarray}
S&=&-\sum_{\gamma_\omega\in\Gamma}\Bigl|\psi(\gamma_\omega)\Bigr|^2
\log{\Bigl|\psi(\gamma_\omega)\Bigr|^2}\label{EntpyEq2},
\end{eqnarray}
under the constraints,
\begin{eqnarray}
\left\{
\begin{array}{l}
\sum_{\gamma_\omega\in\Gamma}
\Bigl|\psi(\gamma_\omega)\Bigr|
^2\I(\gamma_\omega)=\I_c\label{CE5},\\
\sum_{\gamma_\omega\in\Gamma}\Bigl|\psi(\gamma_\omega)\Bigr|^2=1,
\label{PathI1}
\end{array}
\right.
\end{eqnarray}
is given by;
\begin{eqnarray}
\psi(\gamma_\omega)&\simeq&
C~e^{\frac{i}{\hbar}\I(\gamma_\omega)}\label{PathI2},
\end{eqnarray}
where $C\in\C$ is an appropriate normalization constant and $\I$ is rendered dimensionless by dividing by the dimensional constant $\hbar$.
This remark is identical to {\it Theorem 4.1} in ref.\cite{kurihara2014}, whose proof can be found in the same reference.
Here, $\psi$ gives the transition probability (propagator)  from $t=0$ to $t=T$.
At this stage, $\hbar$ is simply a constant to adjust the dimension of the argument of the exponential function.
When it is taken as the Planck constant (divided by $2\pi$), it becomes a solution of  the Shr\"{o}dinger equation\footnote{See, for example, section 1.3 in \cite{kaku1999introduction}}.
The sum over possible paths can be interpreted as the path integral and, therefore, the maximum entropy principle has induced the path integral quantization.

On the other hand, in the context of this report, the sum must be understood as a stochastic integration such as
\begin{eqnarray}
\sum_{\gamma_\omega\in\Gamma}\bullet\Rightarrow
E_\omega
\left[
~\bullet~\Bigl|\B_\Gamma
\right].
\end{eqnarray}
Therefore, the transition amplitude can be obtained from (\ref{PathI2}) as
\begin{eqnarray}
\psi(\gamma(T))&=&C~E_\omega\left[e^{\frac{i}{\hbar}\I(\gamma_\omega)}
\Big|\B_\Gamma\right],\label{Kint}
\end{eqnarray}
where the probability that a mass point moves from $\gamma(0)$ to $\gamma(T)$ is given as $|\psi(\gamma(T))|^2$.
As this integration is not a functional but a stochastic integration, it is not trivial that the  integration (\ref{Kint}) is equivalent to the path integral introduced by Feynman\cite{feynman1942feynman,feynman1965quantum}.
Fortunately, in 1983 Hida and Streit\cite{hida1983,streit1983} rigorously proved that the path integral can be formulated by means of the Hida-distribution for some classes of potential functions.
To do this, they used the following approach\cite{mathematicaltheory}.
The path integral of a test functional $f(\gamma)$ can be expressed formally as
\begin{eqnarray}
I_\mathrm{path}&=&C\int f(\gamma)e^{\frac{i}{\hbar}\I(\gamma)}\DD\gamma,\label{pathI}
\end{eqnarray}
where $\DD\gamma$ is the integration ``measure" of the functional integration, $C$ is an appropriate normalization factor and the integration is performed over all possible paths.
However, as is well known,  the $\DD\gamma$ cannot exist as the Lebesgue measure and, therefore, the integration is not well-defined in general.
To treat this functional ``{\it measure}'' rigorously, they introduced white noise, as is briefly explained  in {\bf Example \ref{WH}} in Appendix \ref{A4-1}.
Using $1=\exp{[\dot{B}^2/2]}\exp{[-\dot{B}^2/2]}$, integration (\ref{pathI}) can be rewritten as;
\begin{eqnarray}
I_\mathrm{path}&=&C\int f(\gamma)e^{\frac{i}{\hbar}\I(\gamma)+\dot{B}(\tau)^2/2}
e^{-\dot{B}(\tau)^2/2}
\DD\gamma,\nonumber\\
&=&C'\int_{E^*} f(\gamma_\omega)e^{\frac{i}{\hbar}\I(\gamma_\omega)+\dot{B}(\tau)^2/2}
\mu_G(d\gamma_\omega),\label{flat}
\end{eqnarray}
where $\dot{B}(\tau)$ is a white noise defined on the dual space $E^*$ of $E=\B_\Gamma$ and  
\begin{eqnarray*}
\mu_G(d\gamma_\omega)&=&e^{-\dot{B}^2/2}\DD\gamma 
\end{eqnarray*}
is the Gaussian measure (classical Wiener measure) on $E^*$.
The flattening factor $\exp{(\dot{B}^2/2)}$ appears in addition to the action integral.
The classical path $\gamma_c(\tau)$ is assumed to be a square-integrable function
\begin{eqnarray*}
\int_0^T|\gamma_c(\tau)|d\tau<\infty,&~&\gamma_c(\tau)\in L_2([0,T],d\gamma) 
\end{eqnarray*}
by means of the norm given by (\ref{dgammac}).
Therefore, the Gel'fand triple $E\subset H=L_2([0,T],d\gamma)\subset E^*$ exists.

Even though the integration measure becomes well-defined as the Hida-distribution, it does not mean this integration converges, and the ``measure'' $\exp({\dot{B}^2/2})\mu_G$ still cannot be understood as the Lebesgue measure.
It is shown that the path integral is convergent when the limit 
\begin{eqnarray}
\lim_{\delta_k\rightarrow0}
\exp{\left[
c\sum_k\left(\frac{\delta_k B}{\delta_k}\right)^2\delta_k
\right]}
\end{eqnarray}
where $c\neq1/2$\cite{hida1983} exists.
There is a wide class of potential functions known to give convergent results\cite{LASCHECK1993221,ANDPANDP19955070107,Venkatesh2001261,deFaria2005}.

From our standpoint that the SLM-space is the most fundamental object in nature, the stochastic-integration representation of the transition probability (\ref{Kint}) has priority over the path integral (\ref{pathI}).
In other words, the path integral approximates the exact representation of transition amplitude and the flattening procedure is not necessary.
In a conclusion, the following remark given. 
\begin{remark}{\bf  (Quantum Amplitude)}{\rm \cite{hida1983}}\\
For the Lagrangian of 
\begin{eqnarray}
\LL&=&\frac{m}{2}\dot{\gamma}^2-V(\gamma),\label{qpaLag}
\end{eqnarray}
the trajectory of the mass point is expected to be {\rm (\ref{sdg})}.
The quantum probability amplitude is given as
\begin{eqnarray}
\psi(\gamma(T))&=&
E_\omega\left[C
\exp{\left[
\frac{i}{\hbar}\frac{m}{2}
\int_0^T\dot{\gamma}_\omega(t)^2dt
\right]}
\exp{\left[
-\frac{i}{\hbar}
\int_0^T V(\gamma_\omega(t))dt
\right]}
\delta\left(\gamma_\omega(T)-\gamma_c(T)\right)
\Big|\B_\Gamma\right],\nonumber\\
&=&
\exp{\left[
\frac{i}{2}\frac{m}{\hbar}
\int_0^T\dot{\gamma}_c(t)^2dt
\right]}
E_\omega\Bigl[C
\exp{\left[
\int_0^T \frac{i}{2}\dot{B}^2
dt\right]}
\exp{\left[
-\frac{i}{\hbar}
\int_0^T V\left(\gamma_c(t)+\sqrt{\frac{\hbar}{m}}\dot{B}\right)dt
\right]}
\nonumber\\&~&\times\sqrt{\frac{m}{\hbar}}
\delta\left(B(t)-\sqrt{\frac{m}{\hbar}}\left(\gamma_c(T)-\gamma_c(t)\right)\right)
\Big|\B_\Gamma\Bigr],\label{qpa2}
\end{eqnarray}
where we set $\dot\gamma\dot{B}dt=0$.
Here, Donscker's $\delta$-functional{\rm \cite{hidasisi2008}}, $\delta(B-a)$, is used.
Instead, the pinned Brownian motion defined in \ref{pbm} gives the same result. 
\end{remark}\noindent
This remark directory follows from the extremal entropy principle and is consistent with that of Hida and Streit\cite{hida1983} except for the omission of the flattening factor.
If the factor in front of $\dot{B}^2$ is replaced as $i/2\rightarrow(1+i)/2$, the representation becomes identical to that in \cite{hida1983} completely (an exact definition of $\dot{B}^2$ is given in Appendix \ref{WNpoly}).
This modification is absorbed in the normalization factor $C$ and will give the same probability amplitude. 
Although the modified form of the flat ``measure" applied to the Gaussian measure might induce some effects on the path integral, it can be confirmed that the path integral with the Gaussian measure gives result consistent with those obtained using the flat measure, as is shown in Appendix \ref{ap7}.
In addition, the integration using the Gaussian measure gives better convergence.
The real part of the factor in front of $\dot{B}^2$ may cause divergent integrations over the path integral; however, using a standard procedure of physics this divergent part can be absorbed into a normalization constant and eliminated by redefining the normalization constant\footnote{
For the standard treatment of the path-integral in physics, see, for example, a section $9.1$ in \cite{opac-b1131978}.
}. 
On the other hand, the definition of the quantum probability amplitude in (\ref{qpa2}) can be expressed as a type of stochastic integration; 
\begin{eqnarray*}
\int g(\gamma_\omega)e^{\frac{i}{\hbar}\I(\gamma_\omega)}
\mu_G(d\gamma_\omega).
\end{eqnarray*}
This integration is simply the Fourier transformation with the Gaussian measure and converges for any functions $g(\gamma_c)$ if the function belongs to the Schwartz space after multiplying the Gaussian weight.

Before closing this subsection, we would like to discuss in further detail a relation between the results in this report and in ref.\cite{kurihara2014} , in which the author treated QM as a phenomenological theory instead of a fundamental theory without assuming any underlying structure, and then constructed the thermodynamics of particle trajectories that were shown to be equivalent to those under QM. 
In this report, the underlying structure of particle dynamics is identified as the probabilistic nature of the SLM-space. 
By contrast, in ref.\cite{kurihara2014} the function $\psi(\gamma)$ is assumed to be a slow-moving function with respect to variance of $\gamma$ around the classical solution.  
This assumption can also be justifies within the framework of this report.
A set of sample paths going through a neighborhood of $\gamma_c(\tau)$, where $\tau$ is a fixed time in $0<\tau<T$, is introduced as;
\begin{eqnarray}
\B_\tau&=&\left\{\gamma_\omega\bigl|\delta_\tau<\epsilon\right\}~\subset~\B_\Gamma,
\end{eqnarray}
where,
\begin{eqnarray}
\delta_\tau&=&|\gamma_c(\tau)-\gamma_\omega(\tau)|
=\sqrt{\sum_{i=1}^3\Bigl|\gamma_c^i(\tau)
-\gamma_\omega^i(\tau)\Bigr|^2},
\end{eqnarray}
and $\epsilon\in\R$.  
The expected value of the function $\psi(\gamma)$, the sample paths of which go through the neighborhood of $\gamma_c$ can be expressed as;
\begin{eqnarray*}
E_\omega\left[\psi(\gamma_\omega)(t)\delta(t-\tau)\bigl|\B_\tau\right]
&=&C\int_{-\epsilon}^{+\epsilon}\psi(\gamma_c)(\tau)
\exp{\left[-\frac{l_\omega^2}{2\sigma^2}\right]}
l^2_\omega dl_\omega,
\end{eqnarray*}
where $l_\omega=|\gamma_c-\gamma_\omega|(\tau)$, $C$ a normalization factor and the integral is understood to be Riemann integration. 
Here, the Gaussian distribution appears as a result of the stochastic integration over the Brownian motion, as shown in equations (\ref{stin1}) and (\ref{stin2}) in Appendix  \ref{A1}.
Therefore, the derivative can be expressed as;
\begin{eqnarray*}
\frac{d\psi(\gamma)(\tau)}{d|\gamma(\tau)|}\Bigg|_{\gamma=\gamma_c}&\sim&
\frac{E_\omega\left[\psi(\gamma_c)(\tau)-\psi(\gamma_{\omega})(\tau)|
\B_\Gamma\right]}{\delta_\tau}
=O(\epsilon)
\end{eqnarray*}
Therefore, the limitation $\lim_{\epsilon\rightarrow 0}d\psi(\gamma_c)/d|\gamma| =0$ can be obtained.

We note that, in this procedure QM  is formulated earlier than classical mechanics.
From this point of view, the reason why the principle of least action can function as the principle for extracting the classical path is that the path obtained from this principle is robust against perturbation form stochastic shaking of the metric. 
This can be understood from the fact that the probability amplitude around the classical path varies very slowly because the first derivative disappears, as shown above, because of the properties of the Brownian motion of the stochastic process.

\subsubsection{Equation of motion}
The discussion in a previous section is based on the white noise analysis of the quantum  partition function, which is equivalent to the path-integral method.
Therefore, the quantum transition function is obtained from the stochastic integration directly instead of through the equation of motion.   
Here, we propose an independent method to introduce a quantum equation of motion based on the stochastic differential equation (a brief explanation of the stochastic differential equation can be found in Appendix  \ref{ap5}).

The Hamiltonian represents the total energy of the system defined on the classical manifold. 
When it does not explicitly include a time-variable, it is conserved during the time evolution of the system. 
This energy conservation must be retained in the SLM-space as a property using the expected value;
\begin{eqnarray}
\frac{d}{dt}
E_\omega\left[\HH(\gamma_\omega,\pi_\omega)|\B_\Gamma\right](t)
&=&
E_\omega\left[d\HH(\gamma_\omega,\pi_\omega)/dt|\B_\Gamma\right]=0.\label{eec}
\end{eqnarray}
Here, the stochastic differentiation of the Hamiltonian can be written as;
\begin{eqnarray}
d\HH(\gamma_\omega,\pi_\omega)&=&
\frac{\delta\HH_c}{\delta\gamma_c}
\Bigg|_{c\rightarrow\omega}\cdot d\gamma_\omega
+
\frac{\delta\HH_c}{\delta\pi_c}
\Bigg|_{c\rightarrow\omega}\cdot d\pi_\omega.\nonumber\\ \label{qmito2}
\end{eqnarray}
We take note that $\pi$ and $\gamma$ are vectors and ``$\cdot$'' denotes the inner product of two three-vectors. 
The functional variation in (\ref{qmito2}) can be obtained as;
\begin{eqnarray}
\frac{\delta\HH_c}{\delta\pi_{ci}}
\Bigg|_{c\rightarrow\omega}&=&{\dot \gamma}^i_\omega,\label{dhdg}
\end{eqnarray}
where $i=1,2,3$ are the components of the spatial coordinate.
The preceding is obtained from the classical relation (\ref{gammadot}) and is {\it not} an equation of motion. 
If solutions to the equation of motion are obtained as $\gamma_\omega(t)$ and $\pi_\omega(t)$, their stochastic differentiations can be respectively written as; 
\begin{eqnarray*}
\left\{
\begin{array}{ccc}
d\gamma^i_\omega&=&\dot{\gamma}^i_c(t)\Bigl|_{c\rightarrow\omega}dt+\sum_{j=1,3}\left[\sigma_\gamma(t)\right]_j^idB^j_\gamma(t),\\
d\pi^i_\omega&=&\dot{\pi}^i_c(t)\Bigl|_{c\rightarrow\omega}dt+\sum_{j=1,3}\left[\sigma_\pi(t)\right]_j^idB^j_\pi(t),
\end{array}
\right.
\end{eqnarray*}
where $\sigma_\bullet(t)$ is a $3\times3$ matrix of functions of $\tau$ and $dB_\bullet(t)$ is  the Gaussian white noise with a mean value of zero and a unit variance. 
When $\dot{\pi}_\omega$ and $\dot{\gamma}_\omega$ are multiplied by the first and second equations, respectively, above, the stochastic differential equations; 
\begin{eqnarray}
\left\{
\begin{array}{ccc}
\dot{\pi}_\omega\cdot d\gamma_\omega
&=&\dot{\pi}_c\cdot\dot{\gamma}_c\Bigl|_{c\rightarrow\omega}dt
+\dot{\pi}_\omega\cdot\sigma_\gamma(t)\cdot dB_\gamma(t),\\
\dot{\gamma}_\omega\cdot d\pi_\omega
&=&\dot{\gamma}_c\cdot\dot{\pi}_c\Bigl|_{c\rightarrow\omega}dt
+\dot{\gamma}_\omega\cdot\sigma_\pi(t)\cdot dB_\pi(t),
\end{array}
\right.
\label{sdegp}
\end{eqnarray}
are obtained, where $\cdot$ indicates either an inner product of two three-vectors or a product of $3\times3$ matrix and three-vector. 
Here, the It\^{o}-rule of $(dt)(dB)=0$ is used, and the relation;
\begin{eqnarray*}
\dot{\gamma}_\omega\cdot d\pi_\omega&=&\dot{\pi}_\omega\cdot d\gamma_\omega+
\dot{\gamma}_\omega\cdot\sigma_\pi(t)\cdot dB_\pi(t)+
\dot{\pi   }_\omega\cdot\sigma_\gamma(t)\cdot dB_\gamma(t)
\end{eqnarray*}
follows from (\ref{sdegp}).
As the Brownian motion is symmetric around its mean value$(=0)$, it can be replaced as $dB\rightarrow-dB$. 
As a result, the stochastic differentiation of the Hamiltonian can be obtained from (\ref{qmito2}) as;
\begin{eqnarray}
d\HH&=&\left(
\frac{\delta\HH_c}{\delta\gamma_c}\Bigl|_{c\rightarrow\omega}
+\dot{\pi}_\omega
\right)\cdot d\gamma_\omega
\nonumber+
\dot{\gamma}_\omega\cdot\sigma_\pi(t)\cdot dB_\pi(t)   +
\dot{\pi   }_\omega\cdot\sigma_\gamma(t)\cdot dB_\gamma(t),\label{dH}
\end{eqnarray}
When the expected value is taken on both sides of equation $(\ref{dH})$, the conservation relation
\begin{eqnarray}
E_\omega\left[d\HH/dt\big|\B_\Gamma\right]&=&
E_\omega\left[\left(
\frac{\delta\HH_c}{\delta\gamma_c}\Big|_{c\rightarrow\omega}
+\dot{\pi}_\omega\right)\cdot \dot{\gamma}_\omega\Big|\B_\Gamma\right]=0,\label{dH2}
\end{eqnarray}
can be obtained because the mean value of the Gaussian noise is zero.
Therefore, the stochastic equations of motion such as
\begin{eqnarray}
d\pi_\omega^i&=&
-\frac{\delta\HH_c}{\delta\gamma_{ci}}\Bigl|_{c\rightarrow\omega}dt
+\sum_{j=1,3}\left[\sigma_\pi(t)\right]_j^idB_\pi^j(t),
\label{sem3}
\end{eqnarray}
follow from (\ref{dH2}). 
After solving the above stochastic equation (of motion), further stochastic integration 
\begin{eqnarray}
d\gamma_\omega^i&=&
\frac{\delta\HH_c}{\delta\pi_c^i}\Bigl|_{c\rightarrow\omega}dt
+\sum_{j=1,3}\left[\sigma_\gamma(t)\right]_j^idB_\gamma^j(t),
\label{sem4}
\end{eqnarray}
from (\ref{dhdg}) can produce a trajectory of the mass point.
These differential equations, including that for Brownian motion, are referred to as the {\it stochastic differential equations} (SDE). 
The existence and uniqueness of the stochastic processes as solutions of the SDEs are ensured mathematically when the stochastic processes satisfy the local Lipschitz condition\cite{hsustochastic}, as is true in this case.

To this point, we have not used the constraint that the Brownian motion is {\it pinned}, as the starting and ending points are fixed as in equation (\ref{pbm}). 
The set of SDEs above does not have invariance under time reversal; to preserve the time-reversal symmetry, another set of SDEs must be added  as follows;
\begin{eqnarray}
\left\{
\begin{array}{ccc}
d\gamma_{\bar{\omega}}^i&=&
-\frac{\delta\HH_c}{\delta\pi_c^i}\Bigl|_{c\rightarrow{\bar{\omega}}}dt
+\left[\sigma_{\bar{\gamma}}(t)\right]_j^idB_{\bar{\gamma}}^{j}(t),\\
d\pi_{\bar{\gamma}}^i&=&
~~\frac{\delta\HH_c}{\delta\gamma_c^i}\Bigl|_{c\rightarrow{\bar{\omega}}}dt
+\left[\sigma_{\bar{\pi}}(t)\right]_j^idB_{\bar{\pi}}^{j}(t).
\end{array}
\right.
\label{sem5}
\end{eqnarray}
The sample path indexed by $\bar{\omega}$ is also an element of $\Gamma^0$.
The stochastic process  $\gamma_{\bar{\omega}}(t)$ with Brownian motion $B_{\bar{\bullet}}(t)$ can be interpreted as a process evolving from the future to the past, as it obey the equation of motion with time reversal.

This approach can be compared with a stochastic quantization of Nelson\cite{PhysRev.150.1079}.
When the functions $\sigma_\bullet(t)$ are set to be constants as;
\begin{eqnarray}
\left\{
\begin{array}{ccc}
\left[\sigma_{\gamma}(t)\right]_j^i&=&\sqrt{\hbar/m}~\delta^i_j,\nonumber\\
\left[\sigma_{\pi}(t)\right]_j^i&=&\sqrt{m\hbar}~\delta^i_j,
\end{array}
\right.
\label{sigma}
\end{eqnarray}
equations of  (\ref{sem3}), (\ref{sem4}) and (\ref{sem5}) together reduce to the Newton-Nelson equation.
For instance, the classical Hamiltonian of
$
\HH_c=\pi_c^2/2m+V(\gamma_c)
$  
gives 
\begin{eqnarray*}
-\frac{\delta\HH_c}{\delta\gamma^i_c}&=&-\frac{d}{d\gamma^i_c}V(\gamma_c)~=~F_i(\gamma_c),\\
\frac{\delta\HH_c}{\delta\pi_{ci}}&=&\frac{1}{m}\pi_c^i.
\end{eqnarray*}  
Therefore, the stochastic equation of motion from $d\HH_\omega=0$ gives a result consistent with stochastic quantisation\cite{gliklikh}.

\subsection{Uncertainty and commutation relations}\label{sec3-3}
As shown above, the dynamics of a mass point in the SLM-space are equivalent to QM under stochastic quantization. 
If the stochastic quantization is mathematically equivalent to canonical quantization, the canonical commutation relations must be derived from the stochastic properties of the system.  
Although the relation between the commutation relation and stochastic processes has been discussed by several previous authors\cite{PhysRevA.40.530,hida1977,Biane2010698}, 
here we will show the derivation of the commutation relation in our formalism. 

First, let us consider the uncertainty relation in terms of stochastic processes.
The uncertainty relation between the solutions of equations of motion (\ref{sem3}) and (\ref{sem5}) can be obtained on the covariance matrix of two Gaussian distributions.
When the equations of motion are assumed to be linear SDEs such as;
\begin{eqnarray}
d\gamma_\omega(t)&=&\left(
\alpha_\gamma(t)+\beta_\gamma(t)\gamma_\omega(t)
\right)dt+\sigma_\gamma dB_\omega,\label{eqg}
\end{eqnarray}
their solutions can be obtained as
\begin{eqnarray}
\begin{array}{ccl}
\gamma_\omega(t)&=&U_\gamma(t)\left(
\gamma_0+\int_0^t\alpha_\gamma(u)U^{-1}_\gamma(u)du
+\sigma_\gamma\int_0^tU^{-1}_\gamma(u)dB_\omega(u)
\right),\\
U_\gamma(t)&=&\exp{\left[
\int_0^t\beta_\gamma(u)du
\right]},
\end{array}
\label{solg}
\end{eqnarray}
as shown in {\bf Example \ref{LSDE}} in Appendix  \ref{ap5}. 
Here, $\{\alpha_\gamma,\beta_\gamma\}$ are integrable functions given by the Hamiltonian. 
The SDEs and their solutions with respect to the canonical momentum are written by simple replacements of $\bullet_\gamma\rightarrow\bullet_\pi$ in ({\ref{eqg}) and (\ref{solg}).
The expected values of these solutions coincide with classical solutions as follows:
\begin{eqnarray*}
\gamma_c(t)&=&
E_\omega\left[
\gamma_\omega|\B_\Gamma
 \right],\\
\pi_c(t)&=&
E_\omega\left[
\pi_\omega|\B_\pi
 \right].
\end{eqnarray*}
The classical solutions can be obtained as the solutions of a set of classical equations of motion as;
\begin{eqnarray}
\frac{d\gamma_c(t)}{dt}&=&
\alpha_\gamma(t)+\beta_\gamma(t)\gamma_c(t),~~\gamma_c(0)=\gamma_0,\label{eom1}\\
\frac{d\pi_c(t)}{dt}&=&
\alpha_\pi(t)+\beta_\pi(t)\pi_c(t),~~\pi_c(0)=\pi_0.\label{eom2}
\end{eqnarray}
Therefore, the covariance matrix can be obtained as
\begin{eqnarray}
\sigma^2(\gamma,\pi)
&=&E_\omega[(\gamma_\omega-\gamma_c)(\pi_\omega-\pi_c)|\B_\Gamma],\nonumber\\
&=&\sigma_\gamma\sigma_\pi
\int_0^T
U_\gamma(t)\bar{U}^{-1}_\gamma(t)
U_\pi   (t)\bar{U}^{-1}_\pi   (t)dt,
\nonumber\\
&=&\sigma_\gamma\sigma_\pi
\int_0^T\left(
\int_0^t
U_\gamma(t)U^{-1}_\gamma(u)
U_\pi   (t)U^{-1}_\pi   (u)
du\right)dt,
\nonumber\\
&=&C\hbar,\label{ucr1}
\end{eqnarray}
where 
\begin{eqnarray*}
\bar{U}^{-1}_\gamma(t)&=&\int_0^tU^{-1}_\gamma(s)dB_\omega(s),
~~(\gamma\rightarrow\pi).
\end{eqnarray*}
The constant $C$ can be obtained when the equation of motion and duration $T$ are fixed.
Here we use the It\^{o} rule, $E_\omega[dB_\omega(t)dB_\omega(s)]\Rightarrow \delta(t-s)ds$, given in Appendix  \ref{A1}, and note that $\gamma_\omega$ and $\pi_\omega$ are induced by common stochastic process. 
If these processes are independent each other, then it must be true that $\sigma^2(\gamma,\pi)=0$.
If the functions $\beta_\bullet(t)$ are simply constants, then $C=T^2/2$, and therefore the relation,
\begin{eqnarray}
\frac{1}{T^2}\sigma^2(\gamma,\pi)&=&\frac{\hbar}{2},
\end{eqnarray}
can be obtained, which is simply  Heisenberg's uncertainty relation.

Next, we consider the commutation relation using the same framework and assumptions as above. 
The commutation relation becomes
\begin{eqnarray}
\left[\pi_\omega,\gamma_\omega\right](t)&=&\left(
\pi_\omega\otimes\gamma_\omega-\gamma_\omega\otimes\pi_\omega
\right)(t),\nonumber\\
&=&
\sigma_\pi\sigma_\gamma\Bigl\{
U_\pi(t)\int_0^tU^{-1}_\pi(u)U_\gamma(u)
\left(
\int_0^uU^{-1}_\gamma(v)dB_\omega(v)
\right)dB_\omega(u)-(\pi\leftrightarrow\gamma)
\Bigr\}.\label{cr1}
\end{eqnarray}
Here, the convolution of two stochastic integrations is defined as
\begin{eqnarray*}
X^{12}_\omega(t)&=&
\left(X^1_\omega\otimes X^2_\omega\right)(t)=\int_0^t x^1(u)
\left(\int_0^u x^2(v)dB_\omega(v)\right)
dB_\omega(u),\\
X^i_\omega(t)&=&\int_0^t x^i(u)dB_\omega(u).
\end{eqnarray*}
The convolution induces a new stochastic process $X^{12}_\omega(t)$, and the stochastic integration appearing in equation $(\ref{cr1})$ can be estimated as 
\begin{eqnarray}
\int_0^u
U^{-1}_\gamma(v)
dB_\omega(v)&=&
B_{\omega}(u;\sigma^2),
\end{eqnarray}
where $B_{\omega}(u;\sigma^2)$ is a Brownian motion with zero mean and a variance of 
\begin{eqnarray}
\sigma^2&=&\int_0^u
\left(U^{-1}_\gamma(v)\right)^2
dv,
\end{eqnarray}
as defined in Appendix  \ref{ap4}.
It  is known that two Brownian motions with different variances will be mutually disjoint, and therefore, the stochastic integration vanishes as
\begin{eqnarray}
\int_0^t
U^{-1}_\pi(u)U_\gamma(u)
B_\omega(u;\sigma^2)
dB_\omega(u)&=&0.
\end{eqnarray}
Therefore, any commutation relation comprising two solutions of the equations of motion, that share the same sample path, $\omega$, will be zero due owing to the same mechanism.
Only commutation relations of type $[\pi_{\omega^\star},\gamma_\omega]$ can be non-zero; in this case, a convolution of two stochastic processes can be defined as;
\begin{eqnarray*}
X^{12}_{\omega^\star\omega}(t)&=&
\left(X^1_{\omega^\star}\otimes X^2_\omega\right)(t)
=\int_0^t
\left[\int_t^u x^1(v)dB_{\omega^\star}(v)\right]
\left[\int_0^u x^2(v)dB_\omega(v)\right]
du,\\
&=&\int_0^t
\left[\int_0^{t-u} x^1(v+u)dB_\omega(v+u)\right]
\left[\int_0^u x^2(v)dB_\omega(v)\right]
du,
\end{eqnarray*}
where $dB_{\omega^\star}$ is a time-reversed Gaussian process with the same sample path, $\omega$. 
Each stochastic integral may produce a Gaussian process with a different variance, for instance, $\sigma^2_1$, and $\sigma^2_2$, and as the result new stochastic process $X^{12}_{\omega^\star\omega}$ will also be a Gaussian process with a mean value of zero.
If the integration region is from $-\infty$ to $+\infty$ instead of $0$ to $t$, the variance becomes $\sigma^2_{12}=\sigma^2_1+\sigma^2_2$.
Again, using the same equations (\ref{eom1}) and (\ref{eom2}) a commutation relation can be obtained as;
\begin{eqnarray}
\left[\pi_{\omega^\star},\gamma_\omega\right](t)
&=&\left(
\pi_{\omega^\star}\otimes\gamma_\omega-
\gamma_\omega\otimes\pi_{\omega^\star}
\right)(t),\nonumber\\
&=&
\sigma_\pi\sigma_\gamma
e^{(\beta_\pi+\beta_\gamma)t}
\Bigg\{
\int_0^t\left[
\int_0^{t-u}e^{-\beta_\pi (v+u)}dB_{\omega}(v+u)
\right]\left[
\int_0^ue^{-\beta_\gamma v}dB_{\omega}(v)\right]
du-(\pi\leftrightarrow\gamma)
\Bigg\}.~~\label{CCR}
\end{eqnarray}
This is a convolution of two Gaussian processes, whose variances are
\begin{eqnarray*}
\sigma_\pi^2        &=&\int_u^te^{-2\beta_\pi    v}dv
=\frac{e^{-2u \beta_\pi}-e^{-2t \beta_\pi}}{2\beta_\pi},\\
\sigma_\gamma^2&=&\int_0^ue^{-2\beta_\gamma v}dv
=\frac{1-e^{-2u \beta_\gamma}}{2\beta_\gamma},
\end{eqnarray*}
respectively. 
The variance of the resulting Gaussian process can be obtained as;
\begin{eqnarray*}
\sigma_{\pi\gamma}^2&=\sigma_\pi^2+\sigma_\gamma^2.
\end{eqnarray*}
When $\beta_\pi\neq\beta_\gamma$, it is easily confirmed that $\sigma_{\pi\gamma}^2\neq\sigma_{\gamma\pi}^2$, where 
\begin{eqnarray*}
\sigma_{\gamma\pi}^2&=&
\int_0^ue^{-2\beta_\pi v}dv+
\int_u^te^{-2\beta_\gamma    v}dv.
\end{eqnarray*}
Therefore, owing to the reproducing property of the Gaussian distribution, the commutation relation is the following stochastic processes;
\begin{eqnarray*}
\left[\pi_{\omega^\star},\gamma_\omega\right](t)&=&
\sigma_\pi\sigma_\gamma e^{(\beta_\pi+\beta_\gamma)t}
\left(
B_\omega(t;\sigma_{\pi\gamma}^2)-
B_\omega(t;\sigma_{\gamma\pi}^2)
\right),\\
&=&
\sigma_\pi\sigma_\gamma e^{(\beta_\pi+\beta_\gamma)t}
B_\omega(t;\sigma_{\pi\gamma}^2+\sigma_{\gamma\pi}^2).
\end{eqnarray*}
After taking an average over a long time interval, the variance becomes
\begin{eqnarray}
{\bar\sigma}^2&=&\lim_{t\rightarrow\infty}
\frac{\sigma_{\pi\gamma}^2+\sigma_{\gamma\pi}^2}{t}=\frac{1}{2}
\left|\frac{1}{\beta_\pi}+\frac{1}{\beta_\gamma}\right|.
\end{eqnarray}
Thus, the commutation relation can be expressed as a single Gaussian process with zero mean and the variance derived above. 
The classical limit of this commutation relation is given as;
\begin{eqnarray}
E_\omega\left[
\int_0^T\left[\pi_{\omega^\star},\gamma_\omega\right](t)dt
\Bigg|\B_\omega\right]&=&0.
\end{eqnarray}
Note the similarity between the commutation relation in (\ref{CCR}) and the stochastic area defined in {\bf Definition \ref{STA}}.
The commutation relation can be understood to be the stochastic area on the phase space.
As the average value of the stochastic area on the phase space is zero, the area surrounded by the classical trajectory is conserved (Liouville's theorem).
On the other hand, the variance of the stochastic area is not zero, which means that it cannot vanish as a result of quantum (stochastic) effects (the uncertainty principle). 

Here, we will discuss the relation between the path integral and stochastic equation of motion methods.
In the path-integral method, the meaning of the amplitude, $\psi(\gamma)$, is given in its definition. 
By contrast, the meaning of the solutions of the SDE under the stochastic equation of motion needs clarifying.
The physical meaning of the solutions can be elucidated by investigating the relation between two methods.
First, let us consider an operator such as;
\begin{eqnarray*}
-i\hbar\frac{\delta\psi(\gamma_c)}{\delta\gamma_c}
\Bigg|_{c\rightarrow\omega}
&=&-
\left(
\int_0^T
\frac{\delta\HH_c}{\delta\gamma_c}dt
\right)\psi(\gamma_c)\Bigg|_{c\rightarrow\omega},\\
&=&\left(
\int_0^T
\dot{\pi}_cdt
\right)\psi(\gamma_c)\Bigg|_{c\rightarrow\omega}
=\pi_\omega\psi(\gamma_\omega).
\end{eqnarray*}
This operator can be interpreted as the momentum as follows;
\begin{eqnarray}
-i\hbar\frac{\delta~}{\delta\gamma_c}\Bigg|_{c\rightarrow\omega}&\Rightarrow&\pi_\omega.\label{momop}
\end{eqnarray}
Next, let us look closely at equation $(\ref{PathI2})$ from the point of view of the stochastic equation of motion.
After solving the SDE,  the transition probability (propagator) becomes;
\begin{eqnarray*}
\psi(\gamma(T))&=&C
\int_{\gamma(0)}^{\gamma(T)}
\exp{\left[
\frac{i}{\hbar}\int_0^T
\LL(\gamma_\omega,\dot{\gamma}_\omega)(\tau)d\tau
\right]}\mu_G(d\gamma_\omega).
\end{eqnarray*}
From equations $(\ref{sem3})$ and $(\ref{sigma})$, the solution of the quantum equation of motion is expected to have the form;
\begin{eqnarray}
\gamma_\omega&=&\gamma_c+\sqrt{\frac{\hbar}{m}}B_\omega,
\end{eqnarray}
where $B_\omega$ is the pinned-Brownian motion given in $(\ref{pbm})$.
This stochastic integration has been proven to be equivalent to the Feynman path-integration\cite{hida1983,Kuo1994} (see also chapter 10.4.2 in \cite{albeverio2008}).
The above results show that the amplitude defined in the path-integral method is consistent with solutions based on the quantum equation of motion.

\section{ General relativity on the SM-space}\label{sec4}
As SMQ is a method for treating the metric tensor itself, it must be closely related to general relativity.
One might expect to obtain the quantum effects of gravity by treating the space time manifold as the SM-space, as well investigated in this section, in which Planck units setting $4\pi G=c=\hbar=1$ will be used (even if the constant $G$ is written in a formula, its value will be unity). 
\subsection{Geometrical framework}\label{sec4-1}
First, we summarize classical general relativity in terms of a vierbein formalism. 
The mathematical setups introduced in this subsection are based on \cite{Kurihara:2016nap}; only classical (non-probabilistic) setups of the differential geometry are treated here.

First, classical general relativity is geometrically re-formulated in terms of a vierbein formalism. 
Let us introduce a four-dimensional Riemannian manifold $\MM$ with a metric tensor $g_{\bullet\bullet}$ of dimension four, which is referred to as the {\it global manifold}. 
A line element on $\MM$ can be expressed as;
\begin{eqnarray}
ds^2&=&g_{\mu_1\mu_2}(x)dx^{\mu_1}\otimes dx^{\mu_2}\label{LE}.
\end{eqnarray}
The covariant derivative for a tensor defined on $\MM$ can be introduced using the affine connection $\Gamma^\bullet_{~\bullet\bullet}$.
Because the affine connection is not a tensor, it is always possible to find a frame in which the affine connection vanishes as $\Gamma^\bullet_{~\bullet\bullet}(x)=0$ at any point $x$ on the global manifold.  
A local manifold with a vanishing affine connection is required to have the symmetries $SO(1,3)$ and $T^4$ ($4$-dimensional translation symmetry) and is referred to as  the {\it local (Lorenz) manifold} and denoted by $\M$. 
On $\M$,  a vanishing gravity, $\partial_\bullet g_\bullett=0$, is not required, and a transformation from the global to the local manifold can be achieved using a tensor function\footnote{More exactly, simultaneously a rank-one global and a rank-one local vector.} $\Varepsilon_\bullet^\bullet(x)$, such as
\begin{eqnarray*}
g_{\mu_1\mu_2}(x)~\Varepsilon^{\mu_1}_{a}(x)~\Varepsilon^{\mu_2}_{b}(x)
&=&\eta_{ab},
\end{eqnarray*}
where $\eta_{ab}$ is a local Lorentz metric under a particle physics convention such as $\eta_{\bullet\bullet}=\mathrm{diag}(1, -1, -1, -1)$.
Here, we employ the convention that indices of Greek and Roman letters represent coordinates on the global and local manifolds, respectively.
Both indices run from zero to three.
Note that the sign of the determinant of the metric tensor is invariant under the general coordinate transformation. 
The inverse function is represented as $(\Varepsilon^{\mu}_{a})^{-1}=\Varepsilon_{\mu}^{a}$, which satisfies
\begin{eqnarray}
\Varepsilon^{\mu}_{a}(x)\Varepsilon_\mu^{b}(x)=\delta^b_a,
~~\Varepsilon^\mu_{a}(x)\Varepsilon_{\nu}^a(x)=\delta^\mu_\nu.\label{LE2}
\end{eqnarray}
Both $\Varepsilon^{\mu}_{a}$ and its inverse $\Varepsilon_{\mu}^{a}$ are called the {\it vierbein}.
An alternative definition of the vierbein can be given using a set of bi-local functions $\xi^a(y,x)$, defined on a neighborhood $U\in\MM$, in which $x,y\subset U$. 
While $\xi^a$ is a function on the global manifold, it is a vector on the local manifold; it is given as a solution of the differential equation
\begin{eqnarray}
\Varepsilon^a_\mu&=&\frac{\partial\xi^a(y,x)}{\partial y^\mu}\Bigl|_{y=x},\label{exi}
\end{eqnarray}
and of equations (\ref{LE}) and (\ref{LE2}) with the boundary condition $\xi^a(x,x)=0$.
The {\it vierbein (one-)form} is introduced as $\eee^a=\Varepsilon^a_\mu dx^\mu$ and behaves as a local vector under the Lorentz transformation (throughout this report, Fraktur letters indicate differential forms) .

The covariant (under the general coordinate transformation) derivative $\nabla_\mu$ applying on a local tensor can be written using the spin-connection $\omega$ as follows:
\begin{eqnarray*}
\nabla_\mu T^a_{~b}&=&
\partial_\mu T^a_{~b}+
\omega^{~a}_{\mu~c}T^c_{~b}-
\omega^{~c}_{\mu~b}T^a_{~c},
\end{eqnarray*}
where
\begin{eqnarray*}
\omega^{~a}_{\mu~b}&=&
\Varepsilon^a_\nu
\Gamma^\nu_{~\mu\rho}~\Varepsilon^\rho_b-
\left(\partial_\mu\Varepsilon^a_\rho\right)\Varepsilon^\rho_b.
\end{eqnarray*}
A {\it spin-connection form } (or simply {\it spin form}) can be defined using the spin connection as
\begin{eqnarray*}
\vomega^{ab}&=&dx^\mu\omega^{~a}_{\mu~c}~\eta^{cb}
=\vomega^{a}_{~c}~\eta^{cb},
\end{eqnarray*}
which satisfies $\vomega^{ab}=-\vomega^{ba}$. 
We note that the spin form is a generator of the Lie algebra of $\mathfrak{so}(1,3)$ and is {\it not} a Lorentz tensor.
The {\it torsion}  and {\it curvature forms} can be introduced as;
\begin{eqnarray*}
\TTT^a&=&d\eee^a+
\vomega^{a}_{~b}\wedge\eee^{b},\\
\RRR^{ab}
&=&d\vomega^{ab}+
\vomega^{a}_{~c}\wedge\vomega^{cb},
\end{eqnarray*}
respectively.
Any two-forms defined on the local manifold can be expanded using the orthogonal bases of the two-forms on the local manifold,  $\eee^\bullet\wedge\eee^\bullet$. 
The curvature form can be expressed as
\begin{eqnarray}
\RRR^{ab}&=&\frac{1}{2}\Ri^{ab}_{~~c_1c_2}~\eee^{c_1}\wedge\eee^{c_2}.\label{curv1}
\end{eqnarray}
The expansion coefficients $\Ri^{\bullet\bullet}_{~~\bullet\bullet}$ are referred to the {\it Riemann curvature tensor}.

The volume form can be expressed using vierbein forms, e.g.
\begin{eqnarray}
\vvv&=&\frac{1}{4!}\epsilon_{a_1a_2a_3a_4}
\eee^{a_1}\wedge\eee^{a_2}\wedge\eee^{a_3}\wedge\eee^{a_4},\label{VF}
\end{eqnarray}
where $\epsilon_{\bullet\bullet\bullet\bullet}$ is a complete anti-symmetric tensor with $\epsilon_{0123}=+1$.
Similarly, three-dimensional volume- and surface-forms are introduced as;
\begin{eqnarray*}
\VVV_{a}&=&\frac{1}{3!}\epsilon_{ab_1b_1b_2}\eee^{b_1}\wedge\eee^{b_2}\wedge\eee^{b_3},\\
\SSS_{ab}&=&\frac{1}{2!}\epsilon_{abc_1c_2}\eee^{c_1}\wedge\eee^{c_2}.
\end{eqnarray*}
The surface form $\SSS_{ab}$ is perpendicular to plane spanned by $\eee^a$ and $\eee^b$. 
Hereafter, dummy {\bf Roman} indices are omitted in expressions when the index pairing is obvious, with dots placed instead at the index, e.g. $2\SSS_{ab}=\epsilon_{ab\cdott}\eee^{\bm\cdot}\wedge\eee^{\bm\cdot}$.

The Lagrangian form and action integral for gravitation can be represented in this terminology as;
\begin{eqnarray}
\LLL_G&=&\frac{1}{2!}\RRR^\cdott(\vomega)
\wedge\SSS_\cdott
-\Lambda~\vvv,\label{Lagrangian44}
\end{eqnarray}
and
\begin{eqnarray*}
\I_G
&=&\int_{\Sigma}\LLL_G,
\end{eqnarray*}
respectively, where $\Lambda$ is the cosmological constant.
To complete the general relativity formulation, Lagrangian forms of the matter and gauge fields must be added. 
The equation of motion can be obtained as the Euler--Lagrange equation by taking variations with respect to $\vomega^{\bullet\bullet}$ and $\SSS_{\bullet\bullet}$.
The Einstein equation and the torsion-less condition can be obtained as the Euler-Lagrange equation of motion as;
\begin{eqnarray*}
\epsilon_{a\cdott\bm\cdot}\left(
\frac{1}{2}\RRR^\cdott\wedge\eee^{\bm\cdot}
-\frac{\Lambda}{3!}\eee^{\bm\cdot}\wedge\eee^{\bm\cdot}\wedge\eee^{\bm\cdot}
\right)
&=&\frac{\kappa_E}{2}T_{a\bm\cdot}\VVV^{\bm\cdot},
\end{eqnarray*}
and $\TTT^\bullet=0$, respectively, where $T_{\bullet\bullet}$ is the energy-momentum tensor of the matter and gauge fields.

\subsection{Construction of local SLM-space}\label{sec4-2}
The general relativity equipping the local SLM-space is constructed in this sub-section. 
The stochastic aspects of general relativity may be implemented by modifying the vierbein form; through which the effects of these aspect may appear on the spin and surface forms.

Let us start from the definition of the vierbein in (\ref{exi}).
The bi-local function $\xi^a(y,x)$ has non-local information of the space time manifold, and the vierbein can be defined through the differentiation of this function. 
By contrast, the distance between two points on the SLM-space is a random variable, that follows a Gaussian distribution with a mean value and variance given by the geometrical distance between the points, as per {\bf Remark \ref{brownEM}}.
A possible modification of the function $\xi^a(y,x)$ that is consistent with the above requirements is given as follows;
\begin{eqnarray}
\hat{\xi}^a(y,x)&=&\xi^a(y,x)+\sigma_\varepsilon\int^ydB^{(a)}(x),\label{xiB}
\end{eqnarray}
where $B^{(a)}(x)~(a=0,1,2,3,)$ are four independent Brownian motions.
Note that the second term of the right-hand side of (\ref{xiB}) represents the total fluctuation of the length between $x$ and $y$ owing to the Brownian motion; a constant, $\sigma_\epsilon$, is prepared to adjust the variance as well as the dimension.
The order parameter of the Brownian motion is not simply a single parameter but a four-dimensional coordinate-vector on the local coordinate-patch of the global manifold.
Moreover, $dB^{(a)}(x)$  must be understood as the white noise introduced in Appendix  \ref{A4-2}.
In this definition, the stochastic function still retains the condition $\hat{\xi}^a(x,x)=0$.
Note that $B^{(a)}(x)$ cannot be considered to be a vector on $\M$; if it is a Lorentz vector with each component fluctuating independently on some local inertial frame, the components will mix with each other following a Lorentz transformation and therefore will not be independent.
In fact, white noise can be treated as an independent Gaussian process in any local coordinate systems (detailed discussions on the $SO(1,3)$ symmetries of white noise are given in Appendix \ref{ap8}.)
From the stochastic function $\hat{\xi}^a$, the stochastic vierbein is obtained as;
\begin{eqnarray*}
\hat\Varepsilon^a_\mu(x)&=&\frac{\partial\xi^a(y,x)}{\partial y^\mu}\Bigl|_{y=x}+\sigma_\varepsilon
\frac{\partial B^{(a)}(x)}{\partial x^\mu},
\end{eqnarray*}
following (\ref{exi}).
Here, the derivative of the Brownian motion is merely a formal expression, as it is differentiable nowhere; an exact definition will be introduced later.
As a short hand notation,  the Brownian motion is denoted as $B_{;\mu}^{(a)}(x)=\partial B^{(a)}(x)/\partial x^\mu$ hereafter.

In the SLM-space, the vierbein form can be a stochastic process formally expressed as
\begin{eqnarray}
\hat{\eee}^{a}(x)&=&
\left(\eee^{a}+
\sigma_{\varepsilon}{\BBB}^{(a)}\right)(x)=\left(
\Varepsilon^a_\mu(x)+\sigma_\varepsilon B_{;\mu}^{(a)}(x)
\right)dx^\mu,\label{stvirbein}
\end{eqnarray}
where ${\BBB}^{(a)}(x)=B_{;\mu}^{(a)}(x)dx^\mu$ ($a=0,1,2,3$) can be understood as the local one-form valued white-noise.
The $\sigma_\epsilon B_{;\bullet}^{(a)}$ must have the same dimensionality as the vierbein, which is dimensionless; therefore, $\sigma_\epsilon$ has the dimension of length only.

The precise definition of $B_{;\bullet}(x)$ can be stated as follows.
Consider the probabilistic distribution introduced in Appendix \ref{ap4}, which will be denoted as $B_{;\bullet}(x)$, under the following setups.
A probability space, $(E^*, \B, \mu)$, is introduced on the square-integrable functions $H=L^2(\R^4)$ defined on the global manifold, the Gel'fand triple $E\subset H\subset E^*$ and the sigma-field $\B$ generated by the cylinder subset of $E^*$.
\begin{definition}\label{VWN}{\bf Vierbein White-Noise (VWN)}\\
The ${B}_{;\mu}(x)$ is called the vierbein white-noise, when its characteristic functional is
\begin{eqnarray*}
C^\mu(x)&=&
\exp{\left[
-\frac{\parallel x\parallel^2}{2}
\right]}e^\mu
=\exp{\left[
-\frac{\d(x)^2}{2}
\right]}e^\mu,
\end{eqnarray*}
where $e^\mu$ is the orthonormal base  on the local coordinate-patch and $x^\mu$ is a coordinate vector. $\d(x)$ is the geometrical distance defined in {\bf Definition \ref{PMM}}.
\end{definition}\noindent
This characteristic functional induces the Gaussian measure uniquely as
\begin{eqnarray*}
C^\lambda(\mu)
&=&\int_{E^*}\exp{\left[i
{B}_{;\mu}x^\mu
\right]}d\mu(x^\lambda).
\end{eqnarray*}
The existence of this measure is ensured by the Bochner--Minlos theorem\cite{hida1980brownian,hidasisi2008}.
The VWN is a stochastic functional with a Gaussian distribution and is pointwise independent.
Note that, in the definition of the stochastic vierbein $\hat{\eee}$ from equation (\ref{stvirbein}) and {\bf Definition \ref{VWN}}, the Brownian {\it motion} itself does not appear explicitly.
In the VWN, the one-dimensional order parameter, the Morkovian time, does not exist; therefore, the stochastic processes do not allow for simple interpretation such as a sample path of the one-dimensional trajectory of the mass point.
One interpretation of the VWN, for instance, is that it gives a {\it sample  universe}, that equips the sample metric tensor.

To confirm that the VWN can construct the SLM-space properly, let us calculate the line element using the stochastic vierbein form equation (\ref{stvirbein}):
\begin{eqnarray*}
d\hat{s}^2&=&\eta_{\cdott}~\hat{\eee}^{\bm\cdot}\otimes\hat{\eee}^{\bm\cdot}
=ds^2+\eta_{\bm\cdott}
\left(2\sigma_\varepsilon\Varepsilon_{\mu_1}^{\bm\cdot}{B}_{;\mu_2}^{(\bm\cdot)}
+\sigma_\varepsilon^2{B}_{;\mu_1}^{(\bm\cdot)}{B}_{;\mu_2}^{(\bm\cdot)}
\right)dx^{\mu_1}\otimes dx^{\mu_2}.
\end{eqnarray*}
The terms in parentheses in the above equation can be understood as stochastic processes of the Gaussian type.
The first term becomes a Gaussian process with a mean value of zero and a variance of;
\begin{eqnarray*}
\sigma_B^2&=&4\sigma_\varepsilon^2~\eta_{\cdott}
\eee^{\cdot}\otimes\eee^{\cdot}\propto\d(|ds|)^2
\end{eqnarray*}
as shown in equation (\ref{stin1}). 
When we set $\sigma_\varepsilon=1/2$, the variance of the white noise becomes $\sigma^2_B=\d(|ds|)^2$.
As mentioned previously, $\sigma_\varepsilon$ should have a length dimension, and therefore the precise representation must be $\sigma_\varepsilon=l_p/2$, where $l_p= \sqrt{\hbar G/c^3}$ is the Planck length.
Although the third term includes Brownian motion,  it is not a stochastic integration but it gives $dB(x)dB(x)=dx$ from the It\^{o} rule.
Therefore, this term gives a fixed correction to the line element, and can be neglected as a higher order effect.
Finally, the stochastic line element can be obtained as;
\begin{eqnarray*}
d\hat{s}^2(x)&=&ds^2\left(
1+B
\right)(x),
\end{eqnarray*}
where $B(x)=\eta_\cdott\eee^{\bm\cdot}(x)\otimes\BBB^{\bm\cdot}(x)$.
This is the representation of the line element on the local SLM-space.

\subsection{Stochastic effects on basic forms}\label{sec4-3}
The volume form (\ref{VF}) on the local SLM-space becomes;
\begin{eqnarray}
\hat{\vvv}&=&\frac{1}{4!}
\epsilon_{a_1a_2a_3a_4}
\hat{\eee}^{a_1}\wedge\hat{\eee}^{a_2}
\wedge\hat{\eee}^{a_3}\wedge\hat{\eee}^{a_4},\nonumber\\
&=&\frac{1}{4!}
\epsilon_{a_1a_2a_3a_4}\huge(
    \eee^{a_1}\wedge\eee^{a_2}\wedge\eee^{a_3}\wedge\eee^{a_4}+
   2{\BBB}^{(a_1)}\wedge\eee^{a_2}\wedge\eee^{a_3}\wedge\eee^{a_4}
\nonumber\\&~&~~~~~~~~~~~~~
  +\frac{3}{2}{\BBB}^{(a_1)}\wedge{\BBB}^{(a_2)}\wedge\eee^{a_3}\wedge\eee^{a_4}
  +\frac{1}{2}{\BBB}^{(a_1)}\wedge{\BBB}^{(a_2)}\wedge{\BBB}^{(a_3)}\wedge\eee^{a_4}
\nonumber\\&~&~~~~~~~~~~~~~~~
    +\frac{1}{16}{\BBB}^{(a_1)}\wedge{\BBB}^{(a_2)}\wedge{\BBB}^{(a_3)}\wedge{\BBB}^{(a_4)}
\huge),~~\label{vf}
\end{eqnarray}
where (\ref{stvirbein}) and  $\sigma_\varepsilon=1/2$ are used.
As all white noises appearing here are independent, the mean value of the volume form becomes the geometric volume as $E[\hat\vvv|\B]=\vvv$, as expected.
On the other hand, the variance of the volume form has a finite value, which corresponds to the zero-point vibration of space time.
This can be interpreted to mean that the four-dimensional volume cannot be zero owing to the uncertainty principle of quantum general relativity.
In cases of a flat global manifold, the vierbein is simply a unit matrix, in which case the variance of the white noise is estimated to be
\begin{eqnarray*}
\delta\left[B_{;\mu}\right]&=&
E\left[\left(B_{;\mu}\right)^2|\B\right]-E\left[B_{;\mu}|\B\right]^2~=~dx^\mu,
\end{eqnarray*}
from the It\^{o} rule.
Thus, we can obtain
$
\delta\left[\BBB^{(a)}\right]=\eee^a
$
for the flat space time case, and the variance of the volume form becomes;
\begin{eqnarray}
\delta_\vvv&=&\frac{65}{16}~\vvv\label{delv}
\end{eqnarray}
form (\ref{vf}).

The stochastic effect on other forms can be estimated similarly.
For both $\VVV_a$ and $\SSS_{ab}$, the mean values are the same as the classical values.
The variances are obtained as;
\begin{eqnarray}
\delta_{\VVV a}~=~\frac{19}{8}\int\VVV_a,&~&\hskip 3mm
\delta_{\SSS ab}~=~\frac{5}{4}\int\SSS_{ab},\label{deltvs}
\end{eqnarray}
for the flat global spacetime case.
In general, the variance of a $n$-dimensional volume form in the flat SLM-space is given as $\delta_n=(3/2)^n-1$.

\subsection{Stochastic effect on the universe}\label{sec4-4}
The variance of the volume form (\ref{delv}) has the same shape as the cosmological term in the Lagrangian form (\ref{Lagrangian44}).
Therefore, even starting from a universe with $\Lambda=0$, an effective cosmological term appears because of the fluctuation of the metric.
To discuss the quantum effect on the universe precisely, it is necessary to formulate the quantum Einstein equation in the stochastic metric space and solve it under certain boundary conditions.
As these tasks are beyond the scope of this report, we employ a semi-classical approximation of quantum general relativity to estimate the quantum effect from the fluctuations of forms.
In this approximation, the effect of the stochastic metric on the vierbein and other forms are considered around the classical solution of the (classical) Einstein equation.

Let us estimate an order of total fluctuation for a flat universe under this approximation.
To estimate a four-dimensional volume of the current universe, $\int\vvv$, we need to know a complete history of the universe.
Here, the simple approximation of a constant expansion of the universe is assumed.
A recent precise measurement gives an age of the universe to be $t_0=(4.35\pm0.01)\times10^{17}$ sec\cite{Ade:2015xua,Olive:2016xmw}. 
The total four-dimensional volume with a radius of $t_0$ can be calculated as
\begin{eqnarray*}
V_U=\int_{Univ}\vvv~=~\frac{\pi^2}{2}t_0^4,
\end{eqnarray*}
for a flat, constantly expanding universe (this is very rough estimation. It is known that the a comoving distance of the universe is about three times larger than $c\times t_0$).
Thus, the variance becomes;
\begin{eqnarray*}
\delta_\vvv&=&\frac{65}{16}V_U
~=~3.26\times10^{167}~\mathrm{(GeV)}^{-4}.
\end{eqnarray*}
The standard deviation averaged over the universe is
\begin{eqnarray*}
\bar{\sigma}_\vvv&=&\frac{\delta_\vvv^{1/2}}{V_U}~=~2.13\times10^{-84}~\mathrm{(GeV)}^{2}.
\end{eqnarray*}
Interestingly, recent measurements of the dark energy fives\cite{Ade:2015xua,Olive:2016xmw}; 
\begin{eqnarray*}
\Lambda&\sim&3H_0~=~6.16\times10^{-84}~\mathrm{(GeV)}^{2}.
\end{eqnarray*}
This coincidence suggests that the cosmological constant in the classical Hamiltonian of the universe is zero ($\Lambda=0$) and the origin of the dark energy (the cosmological term) may be a quantum fluctuation of space time arising from the stochastic effect.
We note that the unnaturally small values appearing here originated from division by a large number (the total volume of the universe in Planck units).
The fluctuation itself is on the order of $\sigma_\epsilon=l_p/2$,
where the Planck length $l_p$ is explicitly written to emphasize that the $\sigma_\epsilon$ has a length dimension.
Even though the cosmological constant is zero at the classical level, an effective term corresponding to the cosmological term appears towing to the stochastic (quantum) effect as;
\begin{eqnarray}
\sigma_\vvv\vvv&\sim&\sqrt{\frac{65}{16}}V_4^{-1/2}~\vvv,\label{qcc}
\end{eqnarray}
where $V_4$ is the four-dimensional volume in question.
This effective term may therefore have induced the re-acceleration seen in the universe\cite{0004-637X-517-2-565,1538-3881-116-3-1009}.

\section{Summary}\label{sec5}
One possible way to explain quantum effects in terms of the stochastic effects of the metric tensor is discussed in this report.
The SM-space can be constructed as a model of the universe in a mathematically legitimate manner.
Besides a mathematical model of a space time manifold, a physical principle is necessary to introduce dynamics into the physical system. 
In this work, the entropy extramization principle is employed instead of the standard principle of least action.
The entropy extramization principle allows to introduce quantum mechanics primarily before formulating classical mechanics.
The entropy (\ref{EntpyEq2})  is defined using the probability amplitude instead of the probability density, and it must be understood as an independent principle from the entropy extramization.
Some characteristics  of quantum mechanics, e.g. quantum entanglement, are due to this principle.

It is shown that the QM of a mass point are equivalent to its classical mechanics on the SLM-space.
The path-integration, quantum equation of motion, uncertainty relation and commutation relations were re-formulated on the SLM-space and shown to be consistent with standard definitions. 
In the formalism proposed in this study, QM is formulated based on the extremal entropy principle a priori and before classical mechanics.
If the path integration described in this report is more fundamental than the standard definition, the path integral is mathematically well-defined on the Gaussian white-noise measure space.

Because the SM-space directly treats the metric  tensor of a manifold, its application to general relativity is straightforward. 
Basic forms appearing in the four-dimensional space time manifold can be treated as random variables and investigated using the white-noise analysis initiated by Hida.
Within the semi-classical approximation, a possible quantum effect on the  dark energy is discussed.
This result might be an explanation of the origin of the small cosmological ``constant'' observed in the current universe. 
However, the question as to why zero-point oscillations of the matter and gauge fields do not generate a large vacuum energy, that is refereed to as the ``cosmological constant problem'', remains unanswered.

To answer this question,  a simultaneous treatment of gravity and matter fields is necessary. 
For the pure gravitational Hamiltonian $\HH_G$ and matter-filed Hamiltonian density $\HH_M$, a total energy can be written as;  
\begin{eqnarray*}
E&=&\int\left(
\HH_G+\HH_M
\right)\vvv.
\end{eqnarray*}
A matter-gravitation interaction is hidden in $\HH_M$ as, e.g. a term $g_{\mu\nu}(\partial^\mu\phi)(\partial^\nu\phi)$in the matter Hamiltonian.
There are two sources of statistical fluctuations in a total energy, the metric tensor in the Hamiltonian and in the volume form.
In the SM-space, a Hamiltonian have a statistical fluctuation $\delta\HH_\bullet$ around a mean value $\HH^0_\bullet$ such as $\HH_\bullet\simeq\HH^0_\bullet+\delta\HH_\bullet$.
The volume form can be expressed also as $\vvv\simeq\vvv_0+\delta\vvv$ as shown in section V.
When only a matter field is considered with ignoring the gravitational contribution, the total energy becomes;
\begin{eqnarray*}
E_M&=&\int\left(\HH_M^0+\delta\HH_M\right)\left(\vvv_0+\delta\vvv\right)
=\int\HH_M^0\vvv_0+\int\delta\HH_M\vvv_0+\int\HH_M\delta\vvv.
\end{eqnarray*}
The first term shows the classical energy and the second therm shows the quantum effect of the matter field.
The third term gives a quantum effect on the matter-gravity interaction.
A fluctuation of the volume form is a order unity $|\delta\vvv|/|\vvv_0|={\cal O}(1)$ as given in (\ref{delv}).
Therefore, a term $\delta\HH_M\vvv$ can not be neglected.
This term is proportional to the total volume and can be absorbed in a definition of the vacuum energy together with the zero-pint oscillation under the approximation ignoring a matter-gravity interaction.
In this report, only the gravitational Hamiltonian is considered with ignoring the matter field in section V.
To treat the cosmological constant problem,  the term $\HH_M\delta\vvv$ must be treated.
Moreover, the quantum effect on the spin form is not discussed in this study.
The clarification of such a treatment of the spin form involves simply clarifying what is a parallel translation of a vector on the SM-space and will be discussed as an independent study elsewhere.

\vskip 5mm
The author appreciates intensive and critical discussions with Prof. T.~Kaneko, Prof. Yuasa, Prof. J.~Fujimoto.
The author also wishes to thank Dr.~Y.~Sugiyama for his continuous encouragement and fruitful discussions.
%
%
\appendix
\section{Proof of Remark \ref{PMMS}}\label{ap1}
\noindent
{\bf Remark \ref{PMMS}}\\
The stochastic Lorentz metric space is a stochastic pseudometric space\cite{schweizer1960statistical}. 
\begin{proof}
Consider three points, $p,q$ and $r$, and whose distribution functions are denoted as $F_e(s;\d(p-q))$, $F_e(s';\d(q-r))$, and $F_e(s+s';\d(p-r))$ with $s,s'\geq0$. 
These distribution functions follow an exponential distribution as defined in {\bf Definition \ref{PMM}}. 
Then distances among those points must keep the triangle inequality such as;
$
\d(p-q)+\d(q-r)\geq \d(p-r).
$
In this case, the triangle inequality can be deduce as
\begin{eqnarray*}
&~&F_e(s+s';p-r)\geq T(F_e(s;p-q),F_e(s';q-r))\\
&\Rightarrow&
\exp{\left(-\frac{s} {\d(p-q)}\right)}+
\exp{\left(-\frac{s'}{\d(q-r)}\right)}\geq
\exp{\left(-\frac{s+s'}{\d(p-q)+\d(q-r)}\right)}
+\exp{\left(-\frac{s}{\d(p-q)}-\frac{s'}{\d(q-r)}\right)}.
\end{eqnarray*}
From assumptions of the equal-time triangle inequality and $s,s'\geq0$, 
\begin{eqnarray*}
\frac{s+s'}{\d(p-r)}\leq\frac{s+s'}{\d(p-q)+\d(q-r)}
\end{eqnarray*}
follows. Since $\exp{(-x)}$ is a monotonically decreasing function for $x>0$ such as;
\begin{eqnarray*}
\exp{\left(-\frac{s+s'}{\d(p-r)}\right)}\geq
\exp{\left(-\frac{s+s'}{\d(p-q)+\d(q-r)}\right)}
\end{eqnarray*}
can be obtained. Furthermore, since all $d$'s and $s,s'$ are positive,
\begin{eqnarray*}
\mathrm{Max}\left[\frac{s}{\d(p-q)},\frac{s'}{d(q-r)}\right]
\geq\frac{s+s'}{\d(p-q)+\d(q-r)}\geq
\mathrm{Min}\left[\frac{s}{\d(p-q)},\frac{s'}{\d(q-r)}\right],
\end{eqnarray*}
with equality on either side {\it iff} $s/{\d(p-q)}=s'/\d(q-r)$. Therefore, we can obtain as;
\begin{eqnarray*}
Min\left[\frac{s}{\d(p-q)},\frac{s'}{\d(q-r)}\right]\leq\frac{s+s'}{\d(p-r)},
\end{eqnarray*}
then,
\begin{eqnarray*}
Max\left[
\exp{\left(-\frac{s} {\d(p-q)}\right)},
\exp{\left(-\frac{s'}{\d(q-r)}\right)}\right]\geq
\exp{\left(-\frac{s+s'}{\d(p-r)}\right)}.
\end{eqnarray*}
On the other hand, due to $\exp{(-x)}<1$ for $x>0$, one can obtained as;
\begin{eqnarray*}
Min\left[
\exp{\left(-\frac{s} {\d(p-q)}\right)},
\exp{\left(-\frac{s'}{\d(q-r)}\right)}\right]\geq
\exp{\left(-\frac{s}{\d(p-q)}-\frac{s'}{\d(q-r)}\right)}.
\end{eqnarray*}
Therefore,
\begin{eqnarray*}
&~&
\exp{\left(-\frac{s} {\d(p-q)}\right)}+
\exp{\left(-\frac{s'}{\d(q-r)}\right)}\\
&~&=
Max\left[
\exp{\left(-\frac{s} {\d(p-q)}\right)},
\exp{\left(-\frac{s'}{\d(q-r)}\right)}\right]
+Min\left[
\exp{\left(-\frac{s} {\d(p-q)}\right)},
\exp{\left(-\frac{s'}{\d(q-r)}\right)}\right]\\
&~&\geq
\exp{\left(-\frac{s+s'}{\d(p-r)}\right)}+
\exp{\left(-\frac{s}{\d(p-q)}-\frac{s'}{\d(q-r)}\right)},\\
&~&\geq
\exp{\left(-\frac{s+s'}{\d(p-q)+\d(q-r)}\right)}+
\exp{\left(-\frac{s}{\d(p-q)}-\frac{s'}{\d(q-r)}\right)}.
\end{eqnarray*}
Thus, the lemma is proved.
\end{proof}

\section{Poisson process}\label{ap2}
When a probability density of a random variable $X$  satisfied a relation,  
\begin{eqnarray*}
P(X>s+t|X>t)=P(X>s),
\end{eqnarray*}
for any $0<s,t\in\R$, it  is recognized to have a  {\it memoryless property}.
Here, $P(a|b)$ is a conditional probability, which gives a probability observing $a$ when $b$ is realized. 
It is easily confirmed that the exponential probability density has a memoryless property, and thus the stochastic process induced by the exponential probability density becomes a Markov process, in which the probability to a next step only depends on the current status. 
The exponential probability density can be induced from the Poisson process.
Considering that a series of independent random variables $\{X_i\}_{i=1,2, \cdots}$ whose probability measure with $P(0<s<X_i<s+ds)=p(s)ds$.
A series of random variables $\{A_j\}_{j=0,1,2,\cdots}$ such as;
\begin{eqnarray*}
A_0=0~~a.s.,&~&A_j=\sum_{i=1}^jX_i
\end{eqnarray*}
is referred as to the ``{\it stochastic process induced by the probability measure$(p(s)ds)$}''.
Another random variable $0\leq N(s)\in\N$ is also introduced as
$
N(s)=\mathrm{max}\left\{k;A_k\leq s\right\}.
$
A frequency distribution is defined for any $n$ and $0=s_0<s_1<\cdots<s_n$ as $\Delta N_i=N(s_i)-N(s_{i-1})$. 
This system can be interpreted as a stochastic process as follows:
\begin{itemize}
\item $s$ is a ``{\it length}'' variable,
\item $X_i$ is a moving distance at an $i$'th step.
\item $p(s)ds$ is probability to move with a distance of $s$.
\item $A_i$ is a total moving distance up to the $i$'th step.
\item $N(s)$ is a total number of steps to reach a position $s$.
\end{itemize}
If $N(s)$ is considered as a fictitious ``{\it time}'', random variables $s/N(s)$ can be interpret as an {\it average velocity} and $\Delta s_i/\delta N_i$ an {\it instantaneous velocity}, where $\Delta s_i=s_i-s_{i-1}$.
When the exponential measure $\mu_e(s;\lambda)$ is used as the  probability measure, the process becomes the Poisson process.
A definition of the Poisson process is given as follows:
\begin{definition}{\bf Poisson process}\label{pp}\\
When a random variable $\{0\leq N(s)\in\Z\}$ for $0\leq s\in\R$ defined on the probability space $(\Omega,\ff,P)$ is satisfied following conditions, it is called ``{\it Poisson process}'';
\begin{enumerate}
\item $N(0)=0$~~a.s.,
\item For $\forall\omega\in\Omega$, $N_\omega(s)$ is left-continuous and monotonically increasing,
\item The frequency distribution of the random variable $N(s)$ makes the Poisson distribution as
\begin{eqnarray*}
P(\Delta N_i;\Delta s_i/\lambda)&=&
\frac{\left(\Delta s_i/\lambda\right)^{\Delta N_i}}{\Delta N_i!}
\exp{\left(-\Delta s_i/\lambda\right)},
\end{eqnarray*}
where 
\begin{eqnarray*}
\Delta N_i&=&N(s_i)-N(s_{i-1}),\\
\Delta s_i&=&s_i-s_{i-1}.
\end{eqnarray*}
\end{enumerate}
\end{definition}\noindent
This $P(\Delta N_i;\Delta s_i/\lambda)$ is considered as a function that gives a probability observing an increment $\Delta N_i$ when a time distance $\Delta s_i$ is given.
For any monotonically increasing series, $0=s_1<\cdots<s_n$,  the probability to find a sequence of a frequency distribution $P(\{\Delta N_i\},1\leq i\leq n;\Delta s_i/\lambda)$ can be obtained as $\prod_{i=1}^nP(\Delta N_i;\Delta s_i/\lambda)$ because a memoryless property has an exponential measure. 
A probability measure will produce a series of random variables by its independent trials. 
Those random variables are categorized to the {\it independent and identically distributed random variables}~($i.i.d.$).
It is known well that a stochastic process induced by an exponential measure, $\mu_e(s,\lambda)$, a frequency distribution will form a Poisson process.
Therefore, the following lemma can be stated:
\begin{lemma}{\bf stochastic process induced by exponential measure}\label{pp2}\\
For a stochastic process $A_i$, following two statements are equivalent:
\begin{enumerate}
\item A stochastic process $A_i$ is induced by an exponential measure $\mu(s;\lambda)$.
\item A frequency distribution $N(s)$ induced by the random variables $A_i$ forms a Poisson process.
\end{enumerate}
\end{lemma}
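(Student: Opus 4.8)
The plan is to prove the two implications of the equivalence separately; in both directions the engine is the memoryless identity $P(X>s+t\mid X>t)=P(X>s)$ of the exponential measure recorded at the start of this appendix, repackaged as a renewal (``restart'') statement.

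For \emph{1} $\Rightarrow$ \emph{2}, assume the $X_i$ are i.i.d.\ with law $\mu_e(s;\lambda)$, set $A_0=0$, $A_j=\sum_{i=1}^j X_i$ and $N(s)=\max\{k:A_k\le s\}$. Conditions \emph{1} and \emph{2} of {\bf Definition \ref{pp}} are immediate: $A_0=0$ a.s.\ gives $N(0)=0$ a.s.; $N_\omega$ is the counting function of the points $\{A_j(\omega)\}$, hence non-decreasing and left-continuous under the stated convention, using that almost surely the $A_j$ are distinct and have no finite accumulation point (the latter from $A_j/j\to\lambda>0$). The substantive point is condition \emph{3}. The Poisson marginal $P(N(s)=k)=(s/\lambda)^k e^{-s/\lambda}/k!$ follows by writing $P(N(s)=k)=P(A_k\le s<A_{k+1})$, inserting the Erlang density of $A_k$ (the $k$-fold convolution of $\mu_e(\cdot;\lambda)$) together with the tail $P(X_{k+1}>s-u)=e^{-(s-u)/\lambda}$, and performing the elementary integral. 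For the product form over a partition $0=s_0<s_1<\cdots<s_n$, I would prove the renewal lemma: conditioned on $N(s_{i-1})=k$ and on $A_k=u\le s_{i-1}$, the residual $A_{k+1}-s_{i-1}$ is again $\mu_e(\cdot;\lambda)$-distributed and independent of the configuration on $[0,s_{i-1}]$ by the memoryless property; hence the restriction of the point process to $(s_{i-1},\infty)$ has the same law as the original process shifted by $s_{i-1}$ and is independent of the past. Iterating this over the blocks of the partition yields that the $\Delta N_i$ are independent with the asserted Poisson laws.

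For \emph{2} $\Rightarrow$ \emph{1}, start from $N(s)$ satisfying {\bf Definition \ref{pp}}, define the arrival times $A_k=\inf\{s:N(s)>k-1\}$ (the left-continuous inverse of $N$) and set $X_i=A_i-A_{i-1}$ with $A_0=0$. The base case is $P(X_1>s)=P(N(s)=0)=P(\Delta N=0;\Delta s=s)=e^{-s/\lambda}$, so $X_1\sim\mu_e(\cdot;\lambda)$. For the joint law, condition on $A_1,\dots,A_{k-1}$: up to the left-continuity convention, $\{X_k>s\}$ is the event that $N$ has no increment over an interval of length $s$ beginning at $A_{k-1}$, which by the independence of increments over disjoint intervals and the stationary increment law of {\bf Definition \ref{pp}} has probability $e^{-s/\lambda}$ regardless of $A_1,\dots,A_{k-1}$. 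Hence the $X_i$ are i.i.d.\ exponential with mean $\lambda$, i.e.\ $A_i$ is the stochastic process induced by $\mu_e(s;\lambda)$.

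The main obstacle I anticipate is not a single computation but the measure-theoretic bookkeeping of the renewal/restart step in both directions: making precise the conditioning on the last arrival before $s_{i-1}$ (rather than on $s_{i-1}$ itself), handling the left-continuity of $N$ so that interval endpoints are treated consistently, and verifying that the shifted point process is genuinely independent of the pre-$s_{i-1}$ $\sigma$-algebra. Once this renewal lemma is isolated and proved, both implications reduce to the elementary Erlang/Poisson integral identity and the memoryless property already stated in this appendix.
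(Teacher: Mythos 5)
Your proof is correct, but it takes a genuinely different route from the paper in the forward direction. For \emph{1}$\Rightarrow$\emph{2}, the paper does not use the Erlang/convolution identity at all: it discretizes the interval $\Delta s_i$ into $m$ subintervals, treats each as an independent Bernoulli trial of success probability $\approx \Delta s_i/(\lambda m)$ (independence justified by memorylessness), and obtains the Poisson law as the $m\to\infty$ limit of the binomial distribution ${}_mC_{\Delta N}\,p^{\Delta N}(1-p)^{m-\Delta N}$. Your computation $P(N(s)=k)=P(A_k\le s<A_{k+1})$ via the Erlang density of $A_k$ is an exact, non-limiting argument and arguably cleaner; the paper's discretization has the advantage of making the independence of the increments over a partition look automatic, whereas you correctly isolate that step as a separate renewal lemma (the paper merely asserts the product form ``because of a memoryless property'' without the conditioning argument you sketch). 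For \emph{2}$\Rightarrow$\emph{1}, the two arguments are close in spirit --- the paper computes $p(s)ds = P(0;s/\lambda)\,P(1;ds/\lambda)$ for the first arrival only --- but you go further by conditioning on $A_1,\dots,A_{k-1}$ to establish that \emph{all} interarrival times are i.i.d.\ exponential, which is what the statement actually requires; the paper's version only verifies the marginal law of a single step. The measure-theoretic caveats you flag (conditioning at the random time $A_{k-1}$ rather than at a deterministic time, and the left- versus right-continuity convention for $N$) are real and are glossed over by the paper as well; your version is the more complete of the two.
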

\begin{proof}
$1\Rightarrow 2$:
At first, random variables $\{X_i\}$ for $i=1,2,\cdots$, whose probability measure is $\mu(s;\lambda)$ are introduced. Then a parameter $0<\Delta s=s_i-s_{i-1}$ is divided into $m$ peaces, then each length becomes $ds=\Delta s/m$ and an expected number of random variables $X_i$ in $ds$, the frequency distribution, can be expressed as $\lambda/ds=\lambda m/\Delta s$. The frequency distribution is independent each other and can be obtained as $\mu(0;\lambda m/\Delta s)=\Delta s/(\lambda m)$, because of a memoryless property. Then the probability to observe the frequency distribution $\Delta N$ can be expressed after taking a limitation $m\rightarrow\infty$ as
\begin{eqnarray*}
P(\Delta N;\Delta s/\lambda)&=&\lim_{m \to \infty}~_mC_{\delta_N}
\left(\frac{\Delta s/\lambda}{m}\right)^{\Delta N}\left(1-\frac{\Delta s/\lambda}{m}\right)^{m-\delta N},\\
&=&\frac{(\Delta s/\lambda)^{\Delta N}}{{\delta N}!}\exp{\left(-\frac{\Delta s}{\lambda}\right)},
\end{eqnarray*}
where $~_aC_b=a!/k!(a-b)!$ is the combinatorial coefficient to take $b$ out of $a$ for ~ $0\leq b\leq a\in\Z$.
Therefore, the random variables $\{N(s)\}$ satisfies the requirement ${\it 3}$ in the {\bf Definition~\ref{pp}}.
Requirements ${\it 1}$ and ${\it 2}$ are trivially satisfied. Therefore, $1\Rightarrow 2$ is proved.
\\
$2\Rightarrow 1$: In the stochastic process $A_i$ induced by the probability measure $p(s)ds$, the measure $p(s)ds$ gives  the probability not to find any random variable $X_i$ from $0$ to $s$, then find one of some $X_i$ during next $ds$, which can be expressed as 
\begin{eqnarray*}
p(s)ds&=&
P(0;{s/\lambda})P(1;{ds/\lambda})\\
&=&
\exp{\left(-\frac{s}{\lambda}\right)}\times
\frac{ds}{\lambda}\exp{\left(-\frac{ds}{\lambda}\right)},\\
&=&\frac{1}{\lambda}\exp{\left(-\frac{s}{\lambda}\right)}ds+{\cal O}(ds^2),\\
&\simeq&\mu(s;\lambda).
\end{eqnarray*}
Therefore, the probability measure inducing $A_i$ is nothing but the exponential measure.
\end{proof}

\section{Proof of Remark \ref{brownEM}}\label{ap3}
\noindent
{\bf Remark \ref{brownEM}}{\bf (Brownian motion induced by exponential measure)}(repeat)\\
Suppose that the Poisson process is induced by the exponential measure $\mu_e(s,\lambda)$ with a series of random variables, $\{X_s\}_{s=1,2,\cdots}$, and 
\begin{eqnarray*}
Z_n&=&\frac{1}{\sqrt{n}}\sum_{s=1}^n(X_s-\lambda).
\end{eqnarray*}
The random variable $Z_n$ is know to show a convergence in law to a normal distribution with a mean value at zero and a variance $\lambda$. 
In particular, for $-\infty<a<b<\infty$,
\begin{eqnarray}
\displaystyle \lim_{n \to \infty}P(a\leq Z_n\leq b)&=&\frac{1}{\sqrt{2\pi\lambda}}\int_a^b \exp{\left(-\frac{x^2}{2\lambda}\right)}dx,~~\label{brown2}
\end{eqnarray}
can be obtained.  
Brownian motion is induced by this random variable.
\begin{proof}
For the stochastic process induced by the exponential measure, each $X(s)$ is $i.i.d.$ and the frequency distribution forms a Poisson distribution as stated in {\bf Lemme \ref{pp2}}. 
For the random variable $Z_n$, mean value  and variance are given as $E[X_n]\rightarrow 0~(a.s.)$, and $V[X_n]\rightarrow \lambda<\infty~(a.s.)$, respectively, due to the law of large number. 
Therefore, for the limit $n\rightarrow\infty$, the equation (\ref{brown2}) follows from the central limit theorem.
Therefore, the Gaussian probability measure can be defined as
\begin{eqnarray*}
P(B_\lambda\in dx)&=& \lim_{n \to \infty}P(x\leq Z_n\leq x+dx)
=\frac{1}{\sqrt{2\pi\lambda}}\exp{\left(-\frac{x^2}{2\lambda}\right)}dx.
\end{eqnarray*}
Therefore, the Poisson process converges in low to the Brownian motion.
\end{proof}

\section{Stochastic Integration}\label{ap4}
This appendix is devoted to introduce and summarize the stochastic integrations and stochastic differential equations without proofs.
\subsection{It{\^ o} formula}\label{A1}
\begin{definition}\label{brown}{\bf Brownian motion}\\
If a set of Random variable, ${\bm B}(t)=\{B(t),t\in\R\}$ satisfies the following properties:
\begin{enumerate}
\item ${\bm B}(t)$ is a Gaussian system with $E[B(t)]=0$ for every $t$,
\item $B(0)=0$,
\item $E[(B(t)-B(s))^2]=|t-s|$
\end{enumerate}
it is called Brownian motion\cite{hidasisi2008}.
\end{definition}\noindent
The covariant matrix of the Brownian motion can be obtained as
\begin{eqnarray*}
\Gamma[s,t]&=&E[B(t)B(s)]=\frac{1}{2}\left(
|t|+|s|-|t-s|
\right),
\end{eqnarray*}
due to the definition.
\begin{definition}\label{ip}{\bf It{\^ o} process and It{\^ o} formula}\\
On following setups:
\begin{itemize}
\item Functions:
\begin{enumerate}
\item $x=\{x^i_t\}_{i=1,\cdots,d}\in C^1(\R^d)$ for $t\geq 0$, 
\item $\forall\varphi=\varphi(x)\in C^2(\R)$.
\end{enumerate}
\item Two stochastic processes:
\begin{enumerate}
\item $N$-independent Brownian motion: \\
${\bm B}(t)=\{ B^k(t)\}_{k=1,\cdots,N}$.
\item stochastic processes: \\
$\{A^i(t)\}_{i=1,\cdots,d}$, where
$\forall i, \omega,~~A_\omega^i(0)=0 \wedge A_\omega^i(t)<\infty$,
\end{enumerate}
where $\omega$ is a index for sample paths. The index $\omega$ will be  often omitted.
\end{itemize}
It{\^ o} process is defined as random variables of $\{X_t^i \}_{i=1,\cdots,d}$, $t\geq 0$, which follows:
\begin{eqnarray*}
X^i_t&=&X^i_0+\int_0^t {\bm\sigma}^i\cdot d{\bm B}(s)+A^i(t),
\end{eqnarray*}
where ${\bm\sigma}^i_k(s)=\{\sigma^i_k(s)\in\LL^2\}_{k=1,\cdots,N}$,~
$(\LL^2$ is a set of square integrable functions.$)$
Then the It{\^ o} formula\cite{ito1944} is introduced as
\begin{eqnarray*}
\varphi(X_t)&=&\varphi(X_0)+\int^t_0\frac{\partial\varphi}{\partial x^i}
(X_s){\bm\sigma}^i(s)\cdot d{\bm B}(s)
+\int^t_0
\frac{\partial\varphi}{\partial x^i}(X_s)dA^i_s
+\frac{1}{2}\frac{\partial^2\varphi}{\partial x^i\partial x^j}(X_s)
{\bm \sigma}^i(s)\cdot{\bm \sigma}^j(s)ds.
\end{eqnarray*}
\end{definition}\noindent
Here, integration with respect to $dA^i_s$ must be understood as the Riemann-Stieltjes integration, and the stochastic integration with respect to the Brownian motion ${\bm\sigma}^i\cdot d{\bm B}(s)$ must be understand as the It\^{o} integral, that is defined as;
\begin{eqnarray*}
\int_0^t {\bm\sigma}^i(s)\cdot d{\bm B}(s)&=&
\lim_{n\rightarrow\infty}\sum_{k=0}^{n-1}
{\bm\sigma}^i(s_k)\left[
{\bm B}_{k+1}-{\bm B}_k
\right],
\end{eqnarray*}
where $\{s_i\}_{i=1,\cdots,n}$ are the dissection $0=s_1<\cdots<s_n=t$, and ${\bm B}_k={\bm B}(s_k)$. Each ${\bm B}_k$ will be independent Gaussian noise. 
The It\^{o} integral is the martingale because ${\bm\sigma}^i(s_k)$ is only depends on the process $[0,s_k]$.
Another definition of the stochastic integration can be given as;
\begin{eqnarray*}
\int_0^t {\bm\sigma}^i(s)\circ d{\bm B}(s)&=&
\lim_{n\rightarrow\infty}\sum_{k=0}^{n-1}
{\bm\sigma}^i(s_{k+1/2})\left[
{\bm B}_{k+1}-{\bm B}_k
\right],
\end{eqnarray*}
where $s_{k+1/2}=(s_{k+1}+s_{k})/2$. It is known as the Stratonovich integral, which is not martingale, but sub-martingale. We use the It\^{o} convention as the stochastic integral in this report, unless otherwise stated.
The It{\^ o} formula can be expressed as;
\begin{eqnarray*}
&~&
\varphi(X(t))=\varphi(X_0)+\int^t_0\frac{\partial\varphi}{\partial x^i}
(X(s)){\bm\sigma}^i(s)\cdot d{\bm B}(s)
+\left[\int^t_0
\frac{\partial\varphi}{\partial x^i}(X(s))\beta^i(s)
+\frac{1}{2}\frac{\partial^2\varphi}{\partial x^i\partial x^j}(X(s))
{\bm \sigma}^i(s)\cdot{\bm \sigma}^j(s)\right]ds,
\end{eqnarray*}
when the stochastic process $A^i(s)$ is written as;
\begin{eqnarray*}
A^i(t)-A^i(0)&=&\int_0^t \beta^i(s)ds.
\end{eqnarray*}
This formula can be understood from a Taylor expansion of $\varphi$ with simple rules ({\it It\^{o} rules}) as 
\begin{eqnarray*}
\left\{
  \begin{array}{cll}
    d{\bm B}(s)\cdot d{\bm B}(s)&=&{\bm I}ds,\\
    d{\bm B}(s)(\beta^i(s) ds)&=&0,\\
    (\beta^i(s) ds)(\beta^j(s) ds)&=&0,
  \end{array}
\right.
\end{eqnarray*}
where ${\bm I}$ is a $N\times N$ unit matrix. 
In other words, they are evaluated as $d{\bm B}(s)\sim\sqrt{ds}$ and $dA(s)\sim ds$ in the expansion and taken up to order $ds$ such as;
\begin{eqnarray*}
d\varphi(X(t))&=&\frac{\partial\varphi}{\partial x^i}
(X(t))dX^i(t)
+\frac{1}{2}\frac{\partial^2\varphi}{\partial x^i\partial x^j}(X(t))
dX^i(t)dX^j(t)+\cdots,
\end{eqnarray*}
where
\begin{eqnarray*}
dX^i(t)&=&dA^i(t)+{\bm\sigma}^i(t)\cdot d{\bm B}(t)
=\beta^i(t)dt+{\bm\sigma}^i(t)\cdot d{\bm B}(t),\\
dX^i(t)dX^j(t)&=&
({\bm\sigma}^i(t)\cdot d{\bm B}(t))({\bm\sigma}^j(t)\cdot d{\bm B}(t))
=({\bm\sigma}^i(t)\cdot{\bm\sigma}^j(t))dt.
\end{eqnarray*}
Then, the It{\^ o} formula follows from them. 
Above expressions must be understood as a shorthand for the integration form, and a following remark for one and two random variables $X^i(t)$ can also be written as;
\begin{eqnarray*}
d\left(f(t)X^i(t)\right)&=&f(t)(dX^i(t))+(df(t)/dt)X^i(t) dt,\\
d\left(X^i(t) X^j(t) \right)&=&
X^i(t)(dX^j(t))+(dX^i(t))X^j(t)+(dX^i(t))(dX^j(t)),
\end{eqnarray*}
where $f(t)\in C^2$.
Above formulae are referred as to  the {\it Stochastic Leibniz rule}.
Integration forms of above expressions can be given as:
\begin{eqnarray*}
f(t)X^i(t)-f(0)X^i(0)
&=&\int_0^t f(s)dX^i(s)+\int_0^t\frac{df(s)}{ds}X^i(s) ds,\\
&=&\int_0^t\left(\beta^i(s)+\frac{df(s)}{ds}X^i(s)\right)ds
+\int_0^t{\bm \sigma}^i(s)\cdot d{\bm B}(s),\\
X^i(t) X^j(t)-X^i(0) X^j(0)&=&
\int_0^t\left(X^i(s)\beta^j(s)+X^j(s)\beta^i(s)\right)dt
+\int_0^t\left(X^i(s){\bm \sigma}^j(s)
+X^j(s){\bm \sigma}^i(s)\right)
\cdot d{\bm B}(s)
\\&~&\hskip 8cm
+\int_0^t ({\bm\sigma(t)}^i\cdot{\bm\sigma(t)}^j)dt.
\end{eqnarray*}
The second line is just a special case of the It{\^o} formula with $\varphi(X(t))=X^i(t) X^j(t)$.

The stochastic integration of a function $f(x)$ by the Brownian motion is given as;
\begin{eqnarray}
\int_0^tf(s)dB(s)&=&B(t;\sigma^2),\label{stin1}
\end{eqnarray}
where $B(t;\sigma^2)$ is the Gaussian process with a variance of 
\begin{eqnarray}
\sigma^2=\int_0^tf(s)^2ds,\label{stin2}
\end{eqnarray}
and mean value of zero. 
One of an important property of the Gaussian distribution is a reproducing property, which means the convolution of two Gaussian distributions make a Gaussian distribution.

\subsection{Iterated Stochastic Integration}\label{A2}
First, let us introduce an iterated classical (non-stochastic) integration of one-forms $\omega_{i=1,\cdots,n}$ along the curvilinear path $\{\gamma\in\Gamma\}$ as defined in Section \ref{sec3-1}.
A pull of $\omega_i$ using the path $\gamma$ is expressed as $\gamma^*\omega_i=f_i(t)dt$, and then, the iterative integration can be introduced as;
\begin{eqnarray*}
\int_\gamma\omega_1\cdots\omega_n
&=&\int_{\delta_n}\gamma^*\omega_1\cdots\gamma^*\omega_n,
=\int_0^Tdt_n
\left(f_n(t_n)
\int_0^{t_n}dt_{n-1}
\left(
\cdots \left(f_2(t_2)
\int_0^{t_2}dt_1f_1(t_1)
\right)\cdots\right)\right),
\end{eqnarray*}
using simplexes $\delta_n$ defined as $0\leq t_1\leq\cdots\leq t_n\leq T$.
A natural extension of this iterative integration to the iterative stochastic integrations might be;
\begin{eqnarray*}
I_n\left(\{f_i\}\right)
&=&\int_0^TdB(t_n)
\Bigl(f_n(t_n)\int_0^{t_n}dB(t_{n-1})
\Bigl(
\cdots\Bigl(f_2(t_2)
\int_0^{t_2}dB(t_1)f_1(t_1)
\Bigr)\cdots\Bigr)\Bigr).
\end{eqnarray*}
The existence of this multiple integrations can be shown using approximation of $f_i$ being step functions\cite{hida1980brownian}. 
It is known that the stochastic integration for the unit function can be represented by using the parameterized-Hermite polynomials as follows\cite{hida1980brownian};
\begin{eqnarray*}
I_n(\{{\bm 1}\})
&=&\int_0^TdB(t_n)
\left(\int_0^{t_n}dB(t_{n-1})\left(\cdots
\left(\int_0^{t_2}dB(t_1)\right)
\cdots\right)\right)
=H_n\left(B(T)-B(0),T-0\right),\label{ISI2}
\end{eqnarray*}

where $H_n$ can be expressed as;
\begin{eqnarray*}
H_n(x,t)&=&\frac{(-t)^n}{n!}\exp{\left(
\frac{x^2}{2t}
\right)}
\frac{d^n}{dx^n}
\exp{\left(-
\frac{x^2}{2t}
\right)}.\label{HEP1}
\end{eqnarray*}

One of interesting examples of the iterated stochastic integration is a {\it stochastic area} introduced by L\`{e}vy\cite{levy1940}:
\begin{definition}{\bf Stochastic Area}\label{STA}
\begin{eqnarray*}
S_\omega(T)
&=&\frac{1}{2}\int_0^T
\left[dB^1_\omega(t),dB^2_\omega(t)\right]
=\frac{1}{2}\left\{
\int_0^T\left(\int_0^tdB^1_\omega(s)\right)dB^2_\omega(t)-
\int_0^T\left(\int_0^tdB^2_\omega(s)\right)dB^1_\omega(t)
\right\},\\
&=&\frac{1}{2}\int_0^T\left(
B^1_\omega(t)dB^2_\omega(t)-
B^2_\omega(t)dB^1_\omega(t)
\right),
\end{eqnarray*}
where ${\bm B}_\omega(t)=(dB^1_\omega(t),dB^2_\omega(t))$, with $t\geq0$, is a $2$-dimensional Brownian motion. 
\end{definition}
This quantity can be interpreted as the area surrounding by trajectory of ${\bm B}_\omega$ and straight line connecting between initial and final points of the trajectory. 

\section{Stochastic Differential equation}\label{ap5}
With the same setup as in Appendix \ref{ap4}, one can introduce the stochastic differential equations (SDE) as 
\begin{eqnarray}
dX(t)^i={\bm \sigma}^i(t,X(t))d{\bm B}(t)+\beta(t)^i(t,X(t))dt.\label{SDE}
\end{eqnarray}
Here, ${\bm \sigma}^i$ and $\beta(t)^i$ are referred to as  {\it diffusion} and  {\it drift coefficients}, respectively. 
Exact meaning of equation (\ref{SDE}) might be clear after cast it into a integration form. 
A definition of a solution of the SDE can be given as follows: 
\begin{definition}{\bf Solution of the SDE}\label{SSDE}\\
Let us consider a stochastic process $X$ defined on a probabilistic space $(\Omega,\A, P)$, which is assumed to be $(\A)$-adaptive and measurable $\R^d$-valued continuous stochastic process. 
The solution of the SDE (\ref{SDE}) is the random variables, $X=\{X(t)\}_{t\geq0}$, which satisfied following two conditions:
\begin{enumerate}
\item For any $i$ and $k$, ${\bm\sigma}^i_k(t,X(t))\in\LL^2(\A(t))$, and 
\begin{eqnarray*}
\int_0^T|\beta^i_a(t,X(t))|dt<\infty, 
\end{eqnarray*}
for any $T\geq0$ (a.s.).
\item $X(t)$ satisfies a following relation including a stochastic integral:
\begin{eqnarray*}
X(t)&=&x+\int_0^t{\bm \sigma}^i(t,X(t))d{\bm B}(t)
+\int_0^t\beta(t)^i(t,X(t))dt.
\end{eqnarray*}
\end{enumerate}
When the existence of the Brownian motion is assumed, it is referred to the {\it strong solution}.
\end{definition}\noindent
Here, we will give two examples and their (strong) solutions.
\begin{example}{\bf Linear SDE}\label{LSDE}\\
The solution of the linear is SDE defined as;
\begin{eqnarray*}
dX(t)&=&\left(
\tilde{\alpha}(t)+\tilde{\beta}(t)X(t)
\right)dt
+
\left(
\tilde{\gamma}(t)+\tilde{\delta}(t)X(t)
\right)dB(t),
\end{eqnarray*}
where $\{\tilde{\alpha}(t),\tilde{\beta}(t),\tilde{\gamma}(t),\tilde{\delta}(t)\}$ are given integrable functions.
The solution of this SDE can be written as;
\begin{eqnarray*}
X(t)&=&U(t)\Big\{
X_0+\int_0^t\left(
\tilde{\alpha}(s)
-\tilde{\delta}(s)\tilde{\gamma}(s)\right)U(s)^{-1}ds
+\int_0^t\tilde{\gamma}(s)U^{-1}(s)dB(s)
\Big\},\\
U(t)&=&\exp{\left[
\int_0^t\left(\tilde{\beta}(s)
-\frac{1}{2}\tilde{\delta}^2(s)\right)ds
+\int_0^t\tilde{\delta}(s)dB(s)
\right]}.
\end{eqnarray*}
\end{example}\noindent
\begin{example}{\bf Pinned Brownian motion}\label{pbm}\\
One dimensional SDE of
\begin{eqnarray*}
dX(t)&=&dB(t)+\frac{a-X(t)}{1-t}dt,
\end{eqnarray*}
$0\le t<1$, and $X(0)=0$ is called  a {\it pinned Brownian motion}.
\end{example}\noindent
It is easily confirmed that the solution of this {\it SDE} is
\begin{eqnarray*}
X(t)&=&at+(1-t)\int_0^t\frac{dB(s)}{1-s}=at+B(t)-tB(1)
\end{eqnarray*}
The stochastic process $X(t)$ is starting from the origin at $t=0$ and arrived at $a$ at $t=1$ with the mean convergence.
It is easily understood that time-reversed process, $X(1-t)$, is the same process as original one, because these two processes have the same mean value and variance.

\section{Functional of Brownian motion and white noise}\label{ap6}
Basic definitions and several remarks are given in this section without proofs.
Detailed explanations can be found in \cite{hida1980brownian}. 
\subsection{one-dimensional case}\label{A4-1}
Let us consider distributions of stochastic processes in the sense of Schwartz. 
The real Hilbert-space of square integrable functions is taken as a set of test functions as $E\subset H=L^2(T\in\R)$. 
A space of distribution, $E^*$, is a dual space of $E$, which has a relation of
\begin{eqnarray*}
E\subset H\subset E^*.
\end{eqnarray*}
A standard bilinear form is written as
$
\left\langle x,\xi\right\rangle,
$
where $\xi\in E^*,~x\in E$. 
When $\xi\in H$, this is nothing but an inner product of two vectors on the Hilbert space. 
A norm of two elements in the Hilbert space is denoted as
$\parallel x\parallel^2=\left\langle x,x\right\rangle$. 
A set of stochastic processes defined on some probabilistic space is considered to be parameterized by $x\in E$ such as $\{X(x)|x\in E\}$. 
For the probabilistic distribution (measure) $\mu$ on $E^*$, the characteristic functional is defined as
\begin{eqnarray*}
C_X(x)&=&\int_{E^*}
\exp{\left(i\left\langle \xi,x\right\rangle\right)}
\mu(d\xi).
\end{eqnarray*}
The characteristic functional must have following characteristics:
\begin{enumerate}
\item $C_X$ is continuous on $E$,
\item $C_X$ is positive definite,
\item $C_X(0)=1$.
\end{enumerate}
The measure $\mu(d\xi)$ on $E^*$ is uniquely determined by the characteristic functional, thanks to the Bochner--Minlon theorem.

A vector space of the standard bilinear form, $\left\langle x,f\right\rangle,~(f\in L^2(\R^1))$ is considered.
This vector space becomes the Hilbert space with the inner product as the covariance integral and the subspace of $L^2(\mu)$.
We denote this Hilbert space as $H_1^{(1)}$ and its dual space as $H_1^{(-1)}$.
A subscript $1$ means the Hilbert space defined on a one-parameter space.  $(\pm1)$ is put for future convenience.
Isomorphism $H_1^{(1)}\cong L^2(\R^1)$ can be obtained under the isomorphic map $f\leftrightarrow\left\langle x,f\right\rangle$.
Thus the Gel'fand triple of $H_1^{(1)}\subset H\subset H_1^{(-1)}$ is obtained.

\begin{example}{\bf White noise}\label{WH}\\
The measured space induced by the characteristic function of 
\begin{eqnarray*}
C(x;\sigma^2)&=&
\exp{\left(-\frac{\sigma^2}{2}\parallel x\parallel^2\right)}
\end{eqnarray*}
is named {\it White noise} with a variance of $\sigma^2$, which denoted as $\dot{B}(t)$.
Here $\|\bullet\|$ is a Hilbertian norm.
\end{example}\noindent
The probabilistic distribution of white nose is the Gaussian distribution with mean value of zero and variance $\sigma^2$.
It is easily confirmed that the white noises with different variances are mutually disjoint.
\begin{example}{\bf Second derivative of Brownian motion }\label{WH2}\\
The characteristic functional of
\begin{eqnarray*}
C'(\xi)&=&
\exp{\left(-\frac{1}{2}\|\xi'\|^2\right)}
\end{eqnarray*}
induces a process of second derivative of the Brownian motion, 
$\ddot{B}(t)$ for $-\infty<t<\infty$.
\end{example}\noindent
The measure of this process, $\mu'$, is defined on the same space as that of the original white noise, however those are the mutually disjoint each other. If the support of the original measure is $X$, the measure $\mu'$ gives
\begin{eqnarray*}
\mu(X)=1\Rightarrow\mu'(X)=0,
\end{eqnarray*}
and vice versa.

\subsection{multi-dimensional case}\label{A4-2}
The definition of Brownian motion and white noise can be extended to a multi-dimensional parameter space. 
\begin{definition}\label{mbrown}{\bf Brownian motion on the Multi-dimensional parameter space}\\
If a set of Random variables, ${B}(x)=\{B(x),x\in\R^d\}$ satisfies the following properties:
\begin{enumerate}
\item ${B}(x)$ is a Gaussian system,
\item $E[B(x)]=0$,
\item Covariant matrix is
\begin{eqnarray*}
\Gamma[x,y]&=&E[B(x)B(y)]
=\frac{1}{2}\left(
|x|+|y|-|x-y|
\right),
\end{eqnarray*}
\end{enumerate}
it is called Brownian motion.
\end{definition}\noindent
From the definition, $B({\bf o})=0$ can be obtained, where ${\bf o}=(0,0,\cdots,0)$ is the origin of the parameter space. 
A proof of existence of multi-dimensional Brownian motions can be found in \cite{hidasisi2008}.

In order to define the multi-dimensional white noise, Hilbert space, its dual space and Gel'fand triple defined in the previous subsection are extended to
$E_d\subset H_d\cong L^2(x\in\R^d)$, $E_d^*$ and $E_d\subset H_d\subset E_d^*$.
On the dual space $E^*_d$, a completely additive measure space $(E^*_d, \B, \mu)$ can be constructed. 
According to notations for th one-parameter case, we can denote the Gel'fand triple as
\begin{eqnarray*}
H_d^{(1)}\subset H_d\subset H_d^{(-1)}
\end{eqnarray*}
Existence of the completely additive measure associated with the characteristic function of
\begin{eqnarray*}
C(\xi)&=&\int_{E^*_d} \exp{\left(i\langle x,\xi \rangle \right)}d\mu(x),\nonumber\\
&=&\exp{\left( -\frac{1}{2}\|\xi\|^2\right)},\label{charact2}
\end{eqnarray*}
where $\xi\in E_d$ and $\|\bullet\|$ is a Hilbertian norm, is ensured by the Bochner--Minlos theorem.
\begin{definition}\label{mwhitenoise}{\bf Multi-dimensional white noise}\\
A stochastic process induced by this characteristic function of eq.{\rm(\ref{charact2})} is called {\it multi-dimensional white noise}.
The white noise induced by the multi-dimensional Brownian motion ${B}(x)$ is denoted as $B_{;\mu}(x)$.
\end{definition}

\subsubsection{Polynomial of white noise}\label{WNpoly}
Further extension of the white-noise space for  polynomials of the white noise is possible\cite{hida1980brownian,hidasisi2008}.
At first, a $n$-dimensional white-noise functional space is introduced as
\begin{enumerate}
\item Test functional space $H_d^{(n)}$: a symmetric $(n+1)/2$ dimensional Sobolev space on $\R^n$.
\item Hilbert space $H_n$: a symmetric tensor product of $n$ square-integrable functions.
\item White-noise functional space $H_d^{(-n)}$: a dual space of the test functional space, which is an isomorphic with a symmetric $-(n+1)/2$ dimensional Sobolev space on $\R^n$.
\end{enumerate}
The Gel\rq{}fand triple can be obtained  as
$
H_d^{(n)}\subset H_d\subset H_d^{(-n)}.
$
Power functionals of the white noise are a element of the dual space ($B^n_{;\bullet}\in H_d^{(-n)}$). 

A polynomial of the white-noise functional can be defined on the space $\bigotimes_n  c_n H_n^{(-n)}$ with the Gel\rq{}fand triple of
\begin{eqnarray*}
\bigotimes_n  \frac{1}{c_n} H_n^{(n)}\subset (L^2)\subset\bigotimes_n  c_n H_n^{(-n)},
\end{eqnarray*}
where $c_n$ is a positive monotonically non-increasing series. 

\section{{\bf Path integral with Gaussian measure}}\label{ap7}
Let us start from the Lagrangian (\ref{qpaLag}) and defined a path integral with the Gaussian measure under the time-slicing approximation\cite{KUMANOGO2004197} as follows:
the setup defined in section \ref{sec4} is used in the appendix.
\begin{definition}{\bf (Gaussian path-integral)}\\
From the Lagrangian (\ref{qpaLag}), a transition amplitude from point  $\gamma(0)$ to $\gamma(T)$ is defined as
\begin{eqnarray*}
\psi(\gamma(T))&=&\int \exp{\left(
\frac{i}{\hbar}\int\I(\gamma)
\right)} \mu_G(d\gamma),\\
&=&C(\epsilon)\prod_k\int_{-\infty}^\infty C(\epsilon)
\exp{\left(
\frac{i}{\hbar}\frac{m}{2}\frac{\left(\gamma_{k+1}-\gamma_{k}\right)^2}{\epsilon}-\frac{i\epsilon}{\hbar}
V\left(\frac{\gamma_{k+1}-\gamma_k}{2}\right)
-\frac{1}{2\sigma^2}(\gamma_{k+1}-\gamma_{k})^2
\right)}
d\gamma_k.
\end{eqnarray*}
\end{definition}
\noindent
This path-integral is defined using the time-slicing approximation, which divides a time period from $0$ to $T$ with a width $\epsilon$.
Here $C(\epsilon)$ is an appropriate normalization constant as a function of $\epsilon$ and $\gamma_i=\gamma(\tau_i)$.
A relation between the flat ``measure" and the Gaussian measure is given as
\begin{eqnarray*}
\mu_G(d\gamma_k)&=&\exp{\left[
-\frac{1}{2\sigma^2}\left(\gamma_{k+1}-\gamma_{k}\right)^2
\right]}d\gamma_k,
\end{eqnarray*}
where a  $\sigma^2$ is a variance of the Gaussian distribution with a dimension of length, which may be fixed by characteristics the SMM-space.
The ``measure" is flat when setting $\sigma^2\rightarrow+\infty$.
If the time-slicing width $\epsilon$ is a real number, the integration with the flat \lq\lq{}meadure\rq\rq{} does not converge.
In a standard method to evaluate the path integral in the quantum field theory  (see, for instance, Chapter 9 in a textbook\cite{opac-b1131978}), the integration is treated formally as $\epsilon$ has a negatively finite imaginary-part.
In contrast with the standard method, the Gaussian path-integration converges with a real values of a time-slicing in the SMM-space.
\begin{remark}{\bf (Effect of Gaussian measure)}\\
The Gaussian measure gives the same transition amplitude as that with the flat ``measure''.
\end{remark}\noindent
\begin{proof}
Let us pick up any one of the integrands, say a $k$'th time-slice, in infinite product of integrations, and expand the potential and transition amplitude with respect to the small parameter $\gamma_k-\gamma_{k-1}$ according to ref.\cite{opac-b1131978}, one can get
\begin{eqnarray*}
\psi\left(\tau_{k+1}:\tau_k\right)
&=&C(\epsilon)\int_{-\infty}^\infty d\gamma_{k}
\exp{\left[
\frac{m}{2\hbar}\left(\gamma_{k}-\gamma_{k-1}\right)^2\left(\frac{i}{\epsilon}-\frac{\hbar}{m\sigma^2}\right)\right]}
\left[1-\frac{i\epsilon}{\hbar}V(\gamma_{k-1})+\cdots\right]\\&~&~~~~~~~~~~~\times
\left[
1+(\gamma_{k}-\gamma_{k-1})\frac{\partial~}{\partial\gamma_{k}}
+\frac{1}{2}(\gamma_{k}-\gamma_{k-1})^2\frac{\partial^2~}{\partial\gamma_{k}^2}+\cdots\right]
\psi\left(\tau_{k}:\tau_{k-1}\right),\\
&=&\left[
1+\frac{i\epsilon\hbar}{2m}\frac{\partial^2~}{\partial\gamma_{k}^2}
-\frac{i\epsilon}{\hbar}
V(\gamma_{k-1})\right]\psi\left(\tau_{k}:\tau_{k-1}\right)+\mathcal{O}(\epsilon^2),
\end{eqnarray*}
where the normalization constant is set to be 
\begin{eqnarray*}
C(\epsilon)&=&
\sqrt{
\frac{m}{2\pi i\epsilon\hbar}
}.
\end{eqnarray*}
The integration is performed with a assumption that $\epsilon$ is a real positive-number.
The transition amplitude is independent of a value of $\sigma$.
\end{proof}

\section{Rotational invariance of white noise}\label{ap8}
A rotational group can be introduced on the multi-dimensional white-noise space defined above.
A following definition is based on Yoshizawa\cite{yoshizawa1970} and Hida\cite{hidasisi2008}.
\begin{definition}\label{rotaion}{\bf Rotational group on ${\bm E_d}$}\\
A rotational group, $g$, is defined as a continuous linear homeomorphism acting on $E_d$, which satisfies
\begin{eqnarray*}
\parallel gx \parallel&=&\parallel x \parallel,
\end{eqnarray*}
where $x\in E_d$.
\end{definition}\noindent
It is obvious that a collection of $g$ forms a group. 
Even though this group is named a {\it rotational group} by Yoshizawa, it includes wider class of orthogonal groups in reality.
Therefore, the group is denoted as $O(E_d)$ hereafter.  
This group induces an adjoint transformation $g^*$ on the dual space $E^*_d$ such as
\begin{eqnarray*}
\langle \xi, gx \rangle&=&\langle g^*\xi, x \rangle,
\end{eqnarray*}
where $x\in E_d$ and $\xi\in E^*_d$.
This group is denoted as $O^*(E^*_d)$.
Following statements can easily confirmed:
\begin{itemize}
\item $O^*(E^*_d)$ forms a group.
\item $g^*$ is also a continuous linear homeomorphism.
\item $O^*(E^*_d)$ is isomorphic to $O(E_d)$.
\end{itemize}
Following theorem is essential for a application of the white noise analysis for physics applications.
\begin{theorem}\label{wnrotmu}{\bf rotational invariance of the white-noise measure}\\
The white noise measure $\mu$ is invariant under the group operation $g^*\in O^*(E^*_d)$.
\end{theorem}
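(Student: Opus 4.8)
The plan is to reduce the assertion to the uniqueness clause of the Bochner--Minlos theorem, following the argument given for the one-parameter case by Hida. Since $g\in O(E_d)$ is a continuous linear homeomorphism of $E_d$, its adjoint $g^*$ is a continuous linear homeomorphism of $E^*_d$, hence Borel measurable for the $\sigma$-field $\B$ generated by the cylinder subsets of $E^*_d$; consequently the push-forward measure $\nu:=\mu\circ(g^*)^{-1}$ is well defined on $(E^*_d,\B)$, and it suffices to prove $\nu=\mu$.

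First I would compute the characteristic functional of $\nu$. By the change-of-variables formula for push-forward measures together with the defining adjoint relation $\langle\xi,gx\rangle=\langle g^*\xi,x\rangle$ from {\bf Definition \ref{rotaion}},
\begin{eqnarray*}
C_\nu(x)&=&\int_{E^*_d}\exp{\left(i\langle\xi,x\rangle\right)}d\nu(\xi)
=\int_{E^*_d}\exp{\left(i\langle g^*\xi,x\rangle\right)}d\mu(\xi)\\
&=&\int_{E^*_d}\exp{\left(i\langle\xi,gx\rangle\right)}d\mu(\xi)
=\exp{\left(-\frac{1}{2}\parallel gx\parallel^2\right)},
\end{eqnarray*}
for every $x\in E_d$, where the last equality is the explicit white-noise characteristic functional of {\bf Definition \ref{mwhitenoise}}. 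Since $\parallel gx\parallel=\parallel x\parallel$ by {\bf Definition \ref{rotaion}}, this gives
\begin{eqnarray*}
C_\nu(x)&=&\exp{\left(-\frac{1}{2}\parallel x\parallel^2\right)}=C_\mu(x).
\end{eqnarray*}

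Then I would invoke the uniqueness part of the Bochner--Minlos theorem: a probability measure on $E^*_d$ is uniquely determined by its characteristic functional, so $C_\nu=C_\mu$ forces $\nu=\mu$, that is, $\mu((g^*)^{-1}A)=\mu(A)$ for every $A\in\B$, which is exactly the stated invariance of $\mu$ under $O^*(E^*_d)$. Via the isomorphism $O^*(E^*_d)\cong O(E_d)$ this covers every admissible transformation.

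The only points needing care are bookkeeping rather than substance. One must check that $g^*$ maps $\B$ into $\B$: this is immediate since $g^*$ is a homeomorphism and $\B$ is generated by cylinder sets, which $g^*$ merely permutes because $g$ is invertible on $E_d$. One must also verify that $C_\nu$ is itself a legitimate characteristic functional so that the uniqueness clause applies; but $C_\nu=C_\mu\circ g$ inherits continuity on $E_d$, positive-definiteness, and the normalization $C_\nu(0)=1$ from $C_\mu$ through the homeomorphism $g$. No limiting or approximation argument beyond the cited theorem is required, so I expect no genuine obstacle --- the whole content of the theorem is the isometry property built into the definition of $O(E_d)$ combined with the fact that the Gaussian characteristic functional depends on $x$ only through $\parallel x\parallel$.
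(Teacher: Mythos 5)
Your proposal is correct and follows essentially the same route the paper indicates: the paper's entire proof is the one-line remark that the theorem ``can be proved by checking an invariance of the characteristic functional after applying an operation $g^*$,'' and your argument is precisely that check, carried out via the push-forward measure, the adjoint relation $\langle\xi,gx\rangle=\langle g^*\xi,x\rangle$, the isometry $\parallel gx\parallel=\parallel x\parallel$, and the uniqueness clause of the Bochner--Minlos theorem. The additional bookkeeping you supply (measurability of $g^*$ on the cylinder $\sigma$-field and the verification that $C_\mu\circ g$ is again a legitimate characteristic functional) only fills in details the paper leaves implicit.
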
\noindent
It can be proved by checking an invariance of the  characteristic functional after applying an operation $g^*$ on $\xi\in E^*_d$.
Note that this theorem is valid for the Lorentz transformation $\Lambda$ on the SLM-space due to the fact $\Lambda\in O(E_4)$.
Therefore, one can treat the VWN, $B^{(a)}_{;\mu}(x)$ defined in section \ref{sec4-2}, as the local Lorentz-vector, since the white-nose measure is invariant under $\Lambda^*\in O^*(E^*_4)$ .
%
%
\bibliographystyle{elsarticle-num}
\bibliography{ref}
\end{document}